\newcommand{\set}[1]{\{#1\}}
\newcommand{\abs}[1]{{\left|#1\right|}}
\newcommand{\zo}{{\set{0,1}}}
\newcommand{\pmk}{{\set{-k,\ldots,k}}}
\newcommand{\slide}{\text{slide}}
\newcommand{\Update}{\text{\it Update}}
\newcommand{\matureList}{\text{\it matureList}}
\newcommand{\Null}{\text{null}}
\newcommand{\rightMature}{\text{\it rightMature}}
\newcommand{\maturePeriod}{\text{\it maturePeriod}}
\newcommand{\startSlide}{\text{\it startSlide}}
\newcommand{\moveToMatures}{\text{\it moveToMatureDiagonals}}
\newcommand{\handleMatures}{\text{\it processMatureDiagonals}}
\newcommand{\newRightMature}{\text{\it resetRightMature}}
\newtheorem{theorem}{Theorem}[section]
\newtheorem{corollary}[theorem]{Corollary}
\newtheorem{proposition}[theorem]{Proposition}
\newtheorem{lemma}[theorem]{Lemma}
\newtheorem{claim}[theorem]{Claim}
\newenvironment{proof-sketch}{\noindent{\bf Sketch of Proof}\hspace*{1em}}{\qed\bigskip}
\newenvironment{proof-idea}{\noindent{\bf Proof Idea}\hspace*{1em}}{\qed\bigskip}
\newenvironment{proof-attempt}{\noindent{\bf Proof Attempt}\hspace*{1em}}{\qed\bigskip}
\providecommand{\algorithmname}{Algorithm}
\providecommand{\algorithmname}{Algorithm}
\theoremstyle{plain}
  \theoremstyle{definition}
  \theoremstyle{plain}
  \theoremstyle{plain}
  \theoremstyle{remark}
  \providecommand{\corollaryname}{Corollary}
  \providecommand{\definitionname}{Definition}
  \providecommand{\lemmaname}{Lemma}
\providecommand{\theoremname}{Theorem}
\begin{document}

	\thispagestyle{empty}
	%\pagenumbering{gobble}
	
	\newcommand*\samethanks[1][\value{footnote}]{\footnotemark[#1}
	\title{Streaming Algorithms For Computing Edit Distance Without Exploiting Suffix Trees\thanks{The research leading to these results has received funding from the European Research Council under the European Union's Seventh Framework Programme (FP/2007-2013)/ERC Grant Agreement n. 616787. The first author was partially supported by ACM-India/IARCS Travel Grant and Research-I Foundation.}}
	
	\thispagestyle{empty}
	%\pagenumbering{gobble}

	\author{
		Diptarka Chakraborty\\
		Department of Computer Science \& Engineering\\
		Indian Institute of Technology Kanpur\\
		Kanpur, India\\
		diptarka@cse.iitk.ac.in
		\and
		Elazar Goldenberg\\
		Charles University in Prague\\
		Computer Science Institute of Charles University\\
		%		Malostransk{\'e}  n{\'a}mesti 25\\
		Prague, Czech Republic\\
		elazargold@gmail.com.
		\and
		Michal Kouck{\' y}\\
		Charles University in Prague\\
		Computer Science Institute of Charles University\\
		%		Malostransk{\'e}  n{\'a}mesti 25\\
		Prague, Czech Republic\\
		koucky@iuuk.mff.cuni.cz 
	}

	\date{}
	\maketitle
	\begin{abstract}
		The edit distance is a way of quantifying how similar two strings are
		to one another by counting the minimum number of character insertions,
		deletions, and substitutions required to transform one string into the
		other.
		
		In this paper we study the computational problem of computing the edit
		distance between a pair of strings where their distance is bounded by a
		parameter $k\ll n$. We present two streaming algorithms for computing edit
		distance: One runs in time $O(n+k^2)$ and the other $n+O(k^3)$. By writing
		$n+O(k^3)$ we want to emphasize that the number of operations per an input
		symbol is a small constant. In particular, the running time does not
		depend on the alphabet size, and the algorithm should be easy to
		implement.
		
		Previously a streaming algorithm with running time $O(n+k^4)$ was given in
		the paper by the current authors (STOC'16). The best off-line algorithm
		runs in time $O(n+k^2)$ (Landau et al., 1998) which is known to be optimal
		under the Strong Exponential Time Hypothesis.
	\end{abstract}
	
	\section{Introduction}
	The \emph{edit distance} (aka \emph{Levenshtein distance})~\cite{Lev65} is a widely used distance measure between pairs of strings $x,y $ over some alphabet $\Sigma$. It finds applications in several fields like computational biology, pattern recognition, text processing, information retrieval and many more. The edit distance between $x$ and $y$, denoted by  $\Delta(x,y)$, is defined as the minimum number of character insertions, deletions, and substitutions needed for converting $x$ into $y$. Due to its immense applicability, the computational problem of computing the edit distance between two given strings $x$ and $y \in \Sigma^n$ is of prime interest to researchers in various domains of computer science. Sometimes one also requires that the algorithm finds an \emph{alignment} of $x$ and $y$, i.e., a series of edit operations that transform $x$ into $y$.

	In this paper we study the problem of computing edit distance of strings when given an {\em a priori} upper bound $k\ll n$ on their distance. 
This is akin to {\em fixed parameter tractability}.  Arguably, the case when the edit distance is small relative to the length of the strings
is the most interesting as when comparing two strings with respect to their edit distance we are implicitly making an assumption that the strings
are similar. If they are not similar the edit distance is uninformative. There are few exceptions to this rule, most notably the reduction
of instances of formula satisfiability (SAT) to instances of edit distance of exponentially large strings \cite{BI15} where the edit
distance of resulting strings is close to their length. However, such instance of the edit distance problem are rather artificial.
For typical applications the edit distance of the two strings is much smaller then the length of the strings. Consider for example copying DNA
during cell division: Human DNA is essentially a string of about $10^9$ letters from $\{A,C,G,T\}$, and due to imperfections in the copying
mechanism one can expect about 50 edit operations to occur during the process. So in many applications we can be looking for a handful
of edit operations in large strings.

        Landau et al.~\cite{LMS98} provided an algorithm
	that runs in time $O(n+k^2)$ and uses space $O(n)$ when size of the alphabet $\Sigma$ is constant. In general the running time of the algorithm given in~\cite{LMS98} is $O(n\cdot \min\{\log n, \log |\Sigma|\}+k^2)$. 
In this paper we revisit this question and study streaming algorithms for edit distance, that is, algorithms that make only one or few passes over the input $x$ and $y$.
We consider so called {\em synchronous} streaming model where $x$ and $y$ are processed left-to-right in parallel at about the same rate, and the internal
memory of the algorithm is limited compared to the size of the whole input.
We provide two algorithms for computing edit distance that run in a streaming fashion. One of them essentially matches the parameters of
the algorithm given by~\cite{LMS98}, improving on them slightly, while working in streaming fashion using only $O(k)$ internal memory.
The other one which we consider to be the main contribution of this paper differs slightly in its parameters but we believe it is superior in practicality.
	
	The algorithm given by~\cite{LMS98} relies on a suffix tree machinery and builds suffix trees for the entire input. 
 	While from theoretical perspective the task 
	of building a suffix tree requires only linear time for a constant-size alphabet, practically
	they are quite expensive to build because of hidden constants.  Moreover, for arbitrary-size alphabets suffix trees incur super-linear cost. More specifically, the known algorithms used to build a suffix tree of a string of length $n$ over alphabet $\Sigma$ run in time $O(n \cdot \min\{\log n, \log |\Sigma|\})$ \cite{Weiner73, Mc76,Ukkonen95}.

	Hence, for practical purposes people prefer the algorithm by~\cite{UKK85} to compute edit distance, despite its running time being $O(nk)$ cf.~\cite{PP08}.  The algorithm by~\cite{UKK85} does not build suffix trees. We propose a new approach for computing edit distance, which does not involve computing suffix trees either, yet, it improves over the running time of~\cite{UKK85} algorithm. 
	We obtain an algorithm that makes one-pass over its input $x$ and $y$, uses space $O(k)$ to compute the edit distance of $x$ and $y$
	(space $O(k^2)$ to compute an optimal alignment of $x$ and $y$) and runs in time $n+O(k^3)$. By writing $n+O(k^3)$ we want
	to emphasize that the number of operations per an input symbol is a small constant. (The constant in the big-$O$ is also reasonable.) Moreover, we emphasize that running time of our algorithm is independent of the alphabet size and thus to the best of our knowledge this is the first algorithm to compute edit distance that runs in ``truly'' linear time for small values of $k$. In that regard it is an improvement over the algorithm given in~\cite{LMS98} for large alphabets. We believe that due to its simplicity it should be relevant for practice. Formally our result is as follows:
	
	\begin{theorem}\label{thm:main}
		
		There is a deterministic algorithm that on input $x,y\in \Sigma^n$ and an integer $k$, such that $\Delta(x,y)\le k$ outputs $\Delta(x,y)$. The algorithm accesses $x$ and $y$ in one-way manner, runs in $c ( n+ k^3)$ time while using $O(k)$ space, where $c$ is a small constant. Moreover, one can output an optimal alignment between $x$ and $y$ while using extra $O(k^2)$ space. The algorithm never runs in time more than $O(kn)$.
		
	\end{theorem}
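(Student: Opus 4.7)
The plan is to simulate, in a streaming fashion, a Landau--Vishkin style diagonal-transition computation, but to realize its longest-common-extension (LCE) steps implicitly by reading the streams rather than by precomputing suffix trees. For diagonals $d\in\pmk$ and edit counts $e\in\zk$ define $L[d,e]$ to be the largest $i$ such that $x[1..i]$ can be converted into $y[1..i+d]$ using at most $e$ edits. The usual furthest-reaching recurrence is
\[
 L[d,e] \;=\; \text{LCE extension starting at }1+\max\{L[d,e-1],\; L[d-1,e-1],\; L[d+1,e-1]\},
\]
and $\Delta(x,y)$ equals the smallest $e$ for which the corresponding frontier reaches the end of the inputs. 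The $(2k+1)(k+1)=O(k^2)$ cells are the combinatorial skeleton; the point is that only $O(k)$ frontiers are simultaneously ``live'' on the advancing wavefront, so they fit into $O(k)$ space.

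First I would set up the data: one entry per live diagonal holding the current $(d,e)$ and the furthest pair of indices already matched on it, together with a sliding window of the last $O(k)$ symbols of each of $x$ and $y$. The window suffices because every edit transition $(d\pm 1,e-1)\to(d,e)$ or $(d,e-1)\to(d,e)$ shifts indices by at most one, and the frontiers on adjacent diagonals never diverge by more than $2k$ positions. The main loop processes one column $(x[i],y[i])$ at a time: for each live frontier whose next required symbol has just arrived, attempt to match and advance the frontier along its diagonal; on a mismatch, fork the frontier into its three successors with edit count $e+1$, prune any successor dominated by an existing live cell on the same diagonal, and retire dominated parents. The invariant to maintain is that after the $i$-th update, the stored values coincide with what an off-line Landau--Vishkin pass on the prefixes $x[1..i],y[1..i]$ would compute, truncated by the $\le k$ edits budget.

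The main obstacle is hitting the $c(n+k^3)$ time bound. A direct implementation that rescans all $O(k)$ live frontiers at every column yields only the fallback $O(kn)$ bound; to improve this I would index live frontiers by the index of the next symbol they are blocked on, so that an arriving symbol dispatches work to just the $O(1)$ frontiers truly waiting for it, producing the $cn$ contribution. All remaining work is charged to edit events, of which there are at most $O(k^2)$ across the entire run (one per $(d,e)$ cell). Each edit event does $O(k)$ work: it may scan up to $O(k)$ buffered symbols to bring the newly spawned frontier up to the current stream position, create up to three successor frontiers, and execute the domination/merge step on the adjacent diagonals. This yields the desired $O(k^3)$ edit-work bound; the $O(kn)$ worst-case follows from the trivial per-column bound of $O(k)$.

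The delicate part that I expect to need the most care is verifying the scheduling/amortization: one must show that a frontier blocked at position $i$ can safely be ignored until step $i$ (no earlier inspection is needed for correctness), and that the merging of frontiers on a common diagonal always preserves the minimum-$e$ entry, so that the streaming computation truly matches the off-line recurrence. Once this is established, outputting $\Delta(x,y)$ is immediate from the first $e$ whose frontier terminates; for the alignment variant, I would attach a back-pointer on each $(d,e)$ cell, using $O(k^2)$ extra space, and emit the optimal edit sequence by traceback once both streams are exhausted.
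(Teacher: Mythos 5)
Your overall skeleton (diagonal-transition recurrence, character-by-character realization of the extension steps, row-by-row streaming with an $O(k)$-symbol window, $O(k)$ live frontiers) matches the paper's, but the step on which the whole theorem hinges --- getting from the trivial $O(kn)$ bound down to $n+O(k^3)$ --- does not work as you describe. Your charging scheme claims that each arriving symbol ``dispatches work to just the $O(1)$ frontiers truly waiting for it'' and that all remaining work can be charged at $O(k)$ per edit event. Neither holds. A frontier that is in the middle of a long run of matches is not blocked on some distant future symbol: at every row $i$ the frontier on diagonal $d$ must compare $x_{i+1}$ with $y_{i+1+d}$, i.e.\ every currently sliding frontier consumes the newly arrived symbols. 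If $\Theta(k)$ diagonals slide in parallel for $\Theta(n)$ consecutive rows --- which happens on natural inputs, e.g.\ $x=y=(ab)^{n/2}$ with a few scattered edits, where after the first waves many diagonals match simultaneously until the end --- your scheme performs $\Theta(nk)$ comparisons. Nor can this work be charged to edit events: there are only $O(k^2)$ cells $(d,e)$, but a single slide between two edit events can last $\Theta(n)$ rows, so ``$O(k)$ work per edit event'' does not cover it.

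What is missing is a combinatorial reason why parallel slides can be collapsed, and this is exactly the paper's key lemma: if diagonals $d_1<\dots<d_\ell$ all match for more than $4k$ consecutive rows ending at row $i$ (they are ``mature''), then the corresponding substrings of $x$ and $y$ are periodic with period $p=\gcd\{d_\ell-d_j\}$, and consequently, for every non-rightmost mature diagonal $d_j$, whether $x_{i+1}=y_{i+1+d_j}$ is decided by the single test $x_{i+1}=x_{i+1-p}$. Hence the algorithm follows only the rightmost mature diagonal character by character and performs one extra period check per row; the fate of all other mature diagonals at that row is inferred for free, and when the check fails the affected diagonals are returned to the ordinary lists. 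Each diagonal pays at most $4k$ explicit comparisons per slide before maturing, and each of the $2k+1$ diagonals has at most $k$ slides, which is where $O(k^3)$ comes from, with $O(1)$ work per row otherwise giving the $cn$ term (plus an $O(k)$-per-event gcd maintenance that also totals $O(k^3)$). Without this periodicity mechanism (or an equivalent way to handle simultaneous long slides), your construction only achieves the $O(kn)$ fallback, so the stated $c(n+k^3)$ bound is not established. The rest of your write-up (the $O(k)$ window argument, the $O(k)$ space for the distance, $O(k^2)$ back-pointers for the alignment, and the $O(kn)$ worst case) is consistent with the paper.
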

	
	The algorithm from the above theorem is efficient with respect to the memory access pattern when $x$ and $y$ are stored in the main memory. In the cache oblivious model with memory block size $B$ the algorithm performs only $O(\frac{n+k^3}{B})$ IO operations.
	
	Our second result confirms that from theoretical point of view, the running time of streaming algorithms is as good as that of the~\cite{LMS98} algorithm. We even improve slightly the dependency on the alphabet size. In particular, instead of $O(n\cdot \min\{\log n, \log |\Sigma|\}+k^2)$ running time of~\cite{LMS98}, we achieve $O(n\cdot \min\{\log k, \log |\Sigma|\}+k^2)$ running time. The algorithm uses only $O(k)$ space to compute the edit distance between the input strings and further $O(k^2)$ space for finding an optimal alignment.
%	
%	In this paper we consider the above problem in the streaming model. 
%	The model is defined as follows. Two input strings are coming in a 
%	streaming fashion. More specifically, at a time instance $i$, a pair 
%	of bits $(x_i,y_i)$ comes where $x_i$ and $y_i$ are the $i$th bits 
%	of $x$ and $y$ respectively. We are interested in algorithms that have 
%	a limited source of memory and cannot store the entire strings $x$ and $y$.
%	In this paper we provide an algorithm to compute edit distance in the above 
%	model while we are promised that the distance is small.
	\begin{theorem}
		\label{thm:computeEdit}
		There is a deterministic algorithm that on input $x,y\in \Sigma^n$ and an integer $k$, such that $\Delta(x,y)\le k$ outputs $\Delta(x,y)$. The algorithm accesses $x$ and $y$ in one-way manner, runs in $O(n\cdot \min\{\log k, \log |\Sigma|\}+k^2)$ time while using $O(k)$ space. Moreover, one can output an optimal alignment between $x$ and $y$ while using extra $O(k^2)$ space. 
	\end{theorem}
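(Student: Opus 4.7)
The plan is to give a streaming implementation of the Landau-Vishkin algorithm. The offline algorithm maintains, for each $0 \le e \le k$ and each diagonal $d$ with $|d| \le e$, the value $D[e, d]$ equal to the largest $r$ such that $x[1..r]$ can be transformed into $y[1..r+d]$ using at most $e$ edits, through the recurrence
\[
  D[e, d] = \max\{D[e-1, d-1],\; D[e-1, d] + 1,\; D[e-1, d+1] + 1\}
\]
followed by an LCE extension that advances $D[e, d]$ while $x[D[e, d]+1] = y[D[e, d]+d+1]$; the edit distance is the smallest $e$ with $D[e, 0] \ge n$. Because only diagonals with $|d| \le k$ are ever consulted, every LCE query involves positions of $x$ and $y$ differing by at most $k$, which is what makes the window-based streaming simulation feasible.

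I would maintain two circular buffers of length $\Theta(k)$ holding the recent tails of $x$ and $y$, together with an on-line dictionary data structure that supports, for each symbol $c$, a sorted list of the positions in the current window where $c$ occurs. Implemented with balanced BSTs indexed over the window-restricted alphabet of size $O(k)$, each insertion and deletion costs $O(\log k)$; indexing over the full alphabet costs $O(\log|\Sigma|)$. This produces the $n\cdot\min\{\log k,\log|\Sigma|\}$ term. On top of this, the DP is driven by a time-bucketed scheduler: once the three predecessors of a cell $(e, d)$ are finalized, its starting value $s_{e,d}=\max\{D[e-1,d-1],D[e-1,d]+1,D[e-1,d+1]+1\}$ is known, and we enqueue $(e,d)$ in bucket $Q_{t^\ast}$ with $t^\ast=\max(s_{e,d}+1,\,s_{e,d}+d+1)$, the first time both streams have delivered the next character needed for its LCE. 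As the read head advances to time $t$, we drain $Q_t$: each cell uses the dictionary to locate the first disagreement between the two streams starting from the prescribed offset, commits $D[e,d]$, stores a back-pointer to the maximizing predecessor, and pushes its at-most-three dependants at level $e+1$ into later buckets. There are only $(k+1)^2=O(k^2)$ cells and constant per-cell work beyond look-ups, giving the $O(k^2)$ term; the back-pointer table uses the claimed $O(k^2)$ extra space for alignment recovery.

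The main obstacle is to prove that every LCE extension demanded by the scheduler can indeed be answered from the $\Theta(k)$ window, and that the total character-comparison work is absorbed into the $O(n\log k+k^2)$ budget rather than blowing up to $\Omega(nk)$. The two ingredients are: (i)~on any fixed diagonal $d$ the match runs produced at successive levels are pairwise disjoint subintervals of $[1,n]$, since $s_{e,d}\ge D[e-1,d]+1>D[e-1,d]$, so the total matching work summed over $e$ on diagonal $d$ is at most $D[k,d]-D[0,d]$; and (ii)~using the dictionary look-ups each such character can be charged either to its arriving stream symbol (contributing $O(\log k)$ to the $n\log k$ term) or to the cell that completes a run (contributing $O(1)$ to the $k^2$ term), so no character participates in more than $O(1)$ cell finalizations beyond its own scheduled look-up. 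Verifying that the two endpoints and the active portion of a pending LCE query fit inside the current window at trigger time follows by induction on $e$ using $|d|\le k$ together with $D[e,d]\le t^\ast$ and the monotonicity of $D$ in $e$; granted this invariant, correctness is inherited from Landau-Vishkin and the complexity analysis above completes the proof of Theorem~\ref{thm:computeEdit}.
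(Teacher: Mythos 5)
Your scheduling architecture (process cells in order of the row at which their extension begins, $O(k^2)$ cells, back-pointers for alignment, $O(k)$ live state) matches the paper's Sections~\ref{sec:topDownLMS}--\ref{sec:streamingAlg}. The gap is in how you answer the extension queries. A dictionary that stores, for each symbol, the sorted positions where it occurs in the current window cannot answer a longest-common-extension query: knowing where each character occurs tells you nothing about the first index at which $x_{i+1,\dots}$ and $y_{i+d+1,\dots}$ disagree, so ``locate the first disagreement'' still reduces to character-by-character comparison. Your charging argument~(ii) then breaks down at exactly the point that makes this problem hard: at a single row $i$ there can be $\Theta(k)$ diagonals simultaneously in the middle of their match runs, each needing its own comparison of $x_{i+1}$ against a \emph{different} $y_{i+1+d}$, and none of them finalizing at that row. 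These $\Theta(k)$ comparisons can be charged neither to the one arriving stream symbol at cost $O(\log k)$ nor to the $O(k^2)$ finalization budget. Summed over $n$ rows this is the $\Omega(nk)$ worst case of Ukkonen's algorithm, not $O(n\log k+k^2)$; your observation~(i) only bounds the work \emph{per diagonal} by $O(n)$, which gives $O(nk)$ overall.

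The paper closes this gap by building, for each overlapping block of $O(k)$ consecutive rows, a generalized suffix tree of the two length-$O(k)$ substrings together with a constant-time lowest-common-ancestor structure; every within-block slide is then answered in $O(1)$, the per-block preprocessing costs $O(k\cdot\min\{\log k,\log|\Sigma|\})$ (hence $O(n\cdot\min\{\log k,\log|\Sigma|\})$ over all $n/k$ blocks), and a slide that reaches the block boundary is parked on a carry-over list and resumed in the next block. If you want to avoid suffix trees entirely, the paper's Section~\ref{sec:noSuffixTrees} shows that the naive comparisons can be salvaged via periodicity of long parallel slides, but that only yields $n+O(k^3)$, which is the paper's other theorem, not this one. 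To repair your proof you would need to replace the per-symbol dictionary with a data structure that genuinely supports LCE over the sliding window, which is precisely the per-block suffix-tree construction.
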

	Previously, the best known (and only) streaming algorithm for edit distance was given in~\cite{CGK16}. That algorithm has running time $O(n+k^{1/4})$, uses space $O(k^4)$,
and is substantially more complex.

   All our algorithms output $\infty$ when run on strings which have edit distance larger than the parameter $k$. Hence, similarly to~\cite{CGK16},
if we allow the algorithms $O(\log \log k)$ passes over the input, we do not have to provide them the parameter $k$. One can search for
the smallest $k$ of the form $2^{{(1+\epsilon)}^i}$ for which the algorithm from Theorem~\ref{thm:main} returns a finite edit distance to obtain an algorithm that handles
all pairs of strings with running time $O(n \log \log n + k'^{3+3\epsilon})$, where $k'=\Delta(x,y)$.

	\subsection{Previous work}
	One can easily solve the problem of computing exact edit distance (the {\em decision} problem) in $O(n^2)$ time using a basic dynamic programming approach~\cite{WF74}. This bound was later slightly improved by Masek and Paterson~\cite{MP80} and they achieved an $O(n^2/\log n)$ time algorithm for finite alphabets. So far this is the best known upper bound for this problem. Recently, it was shown that this bound cannot be improved significantly unless the Strong Exponential Time Hypothesis is false~\cite{BI15, BK15}. They establishe this fact by providing a reduction which (implicitly) maps instances of SAT into instances of edit distance with the edit distance close to $n$. However, their result does not exclude the possibility of getting faster algorithms in the case of small edit distance.
	
	Suppose we are guaranteed that the edit distance between the two input strings is bounded by $k\ll n$. Then there are algorithms that are much more efficient in terms of both time and space. Ukkonen~\cite{UKK85} gave an algorithm to solve the decision problem in time $O(kn)$ and space $O(k)$. The same algorithm uses $O(n)$ space to find the optimal alignment (the {\em search} problem). Later, Landau et al.~\cite{LMS98} solved the decision problem within time $O(n\cdot \min\{\log n, \log |\Sigma|\}+k^2)$ and $O(n)$ space. By slightly modifying their algorithm the search problem can be solved as well using only $O(k^2)$ extra space. Interested readers may read a survey by Navarro~\cite{Nav01} for a comprehensive treatment on this topic. In a very recent development, Chakraborty, Goldenberg and Kouck{\' y}~\cite{CGK16} considered the search problem under the promise that the edit distance is small. They gave a single-pass algorithm that runs in time $O(n+k^4\cdot \min\{\log k, \log |\Sigma|\})$ while using space of size $O(k^4)$. The authors also mentioned that they can remove the promise by paying a penalty in the number of passes over the input and slightly worse time and space complexity. Table~\ref{TaxomonyTable} summarizes the above results.

	\begin{table}[ht]
		\centering
		\caption{Taxonomy of Algorithms Computing Edit Distance}
		\label{TaxomonyTable}
		\begin{tabular}{|l|l|l|l|l|}
			\hline
			Authors        & Time              & Space       \\ \hline
			\cite{WF74}   & $O(n^2)$          & $O(n)$                                                 \\ \hline
			\cite{MP80}    & \begin{tabular}[c]{@{}l@{}}$O(n^2/\log n)$\\ (for finite alphabets)\end{tabular}  & $O(n)$                                                  \\ \hline
			\cite{LMS98}   & $O(n\cdot \min\{\log n, \log |\Sigma|\}+k^2)$        & $O(n)$                                                 \\ \hline
			%\cite{LMS98}    & $O(n\cdot \min\{\log n, \log |\Sigma|\}+k^2)$        & $O(n+k^2)$                   & Search                               \\ \hline
			\cite{UKK85}    & $O(nk)$           & $O(k)$                                          \\ \hline		
			%\cite{UKK85}    & $O(nk)$           & $O(n)$                     & Search                                \\ \hline		
			%\cite{AKO10}    & $n^{1+\epsilon}$ (randomized)	   & $O(n)$    & $(\log n)^{O(1/\epsilon)}$ & Decision				    \\ \hline
			%\cite{Saha14}  & $O(n)$ (randomized) & $O(\log n)$  & $O(k)$                  & Search (streaming)								\\ \hline		
			\cite{CGK16}     & \begin{tabular}[c]{@{}l@{}}$O(n+k^4\cdot \min\{\log k, \log |\Sigma|\})$\\ (randomized streaming and\\ single pass)\end{tabular} & $O(k^4)$                   								\\ \hline
			This paper     & \begin{tabular}[c]{@{}l@{}}$O(n\cdot \min\{\log k, \log |\Sigma|\}+k^2)$\\(streaming and single pass) \end{tabular} & $O(k)$                  								\\ \hline
			%This paper     & \begin{tabular}[c]{@{}l@{}}$O(n\cdot \min\{\log k, \log |\Sigma|\}+k^2)$\\ \end{tabular} & $O(k^2)$                    & \begin{tabular}[c]{@{}l@{}}Search with a\\ promise (streaming and\\ single pass)\end{tabular}								\\ \hline
			This paper     & \begin{tabular}[c]{@{}l@{}}$O(n+k^3)$\\(streaming and single pass) \end{tabular} & $O(k)$                    \\ \hline
			%This paper     & \begin{tabular}[c]{@{}l@{}}$O(n+k^3)$\\ \end{tabular} & $O(k^2)$                    & \begin{tabular}[c]{@{}l@{}}Search with a\\ promise (streaming and\\ single pass)\end{tabular}\\ \hline
		\end{tabular}
	\end{table}
	
	The problem of computing edit distance in the streaming model has been studied first time in~\cite{CGK16}. 
Independently of the current paper, Belazzougui and Zhang~\cite{BZ16} give an algorithm similar to our $O(n+k^2)$ time algoritm.
A related problem, namely \emph{edit distance to monotonicity}, which is equivalent to the problem of finding \emph{longest increasing subsequence}, has been studied extensively in streaming model~\cite{LVZ05, SW07, GJKK07, GG07, EJ08, CLLPTZ11, SS13}. However, the main focus of all of these results was to determine the upper and lower bound on the space requirement instead of time for exact as well as approximate solutions. Another important point to note is that the problem of edit distance to monotonicity is in some sense much easier because it can be solved in $O(n \log n)$ time~\cite{Sch61, Fred75} whereas general edit distance cannot be computed in strictly sub-quadratic time unless SETH is false~\cite{BI15, BK15}.
	
	Finding approximate solutions while computing general edit distance has also been studied extensively. The exact algorithm given in~\cite{LMS98} immediately gives a linear-time $\sqrt{n}$-approximation algorithm. A series of subsequent works improved this approximation factor first to $n^{3/7}$~\cite{BJKK04}, then to $n^{1/3+o(1)}$~\cite{BES06} and later to $2^{\widetilde{O}(\sqrt{\log n})}$~\cite{AO09} while keeping he running time of the algorithm almost linear. Batu et al.~\cite{BEKMRRS03} gave an $O(n^{1-\alpha})$-approximation algorithm that runs in time $O(n^{\alpha/2})$. The approximation factor was further improved to $(\log n)^{O(1/\epsilon)}$, for every $\epsilon>0$ by providing a $n^{1+\epsilon}$ time algorithm~\cite{AKO10}.
	
%\paragraph{Paper by Belazzougui and Zhang.} Last week we have learnt of the paper by Belazzougui and Zhang~\cite{BZ16} which was recently accepted to
%FOCS'16. This paper deals with many edit distance problems such as sketching and document exchange. However it also includes a streaming algorithm
%for computing edit distance that is similar to our $O(n+k^2)$ algorithm. We came up with this algorithm independently and indeed, we made a remark
%regarding this algorithm in the first public ECCC version of our previous edit distance algorithms~\cite{CGK16} in March 2016. 
%We think that the authors of~\cite{BZ16} also came to this algorithm independently based on some earlier publicly available version of our embedding paper. Irrespective of this we see as the main contribution %%of our current paper the $n+O(k^3)$ algorithm. Indeed, our goal was to design
%a fast, simple and hopefully practical algorithm based on principles used in~\cite{CGK16}. We believe we have achieved this goal
%and that the $n+O(k^3)$ algorithm even by itself presents a substantial contribution to the edit distance problem.

	\subsection{Our technique}
	
	To exhibit  the techniques behind our results, let us first briefly introduce 
	the main idea behind~\cite{LMS98} and~\cite{UKK85} algorithms. Both algorithms are based on computing the edit distance matrix for strings $x$ and $y$. Ukkonen~\cite{UKK85} 
	shows that in order to compute the edit distance between pairs of strings of edit 
	distance at most $k$, only $nk$ values in the matrix need to be computed, and he identifies $O(k^2)$ important entries in the matrix. Developing on that, 
	Landau et el.~\cite{LMS98}  showed that using the suffix tree machinery each of these $O(k^2)$ entries can be found in $O(1)$ 
	operations. That machinery is used in order to evaluate queries of the form: ``find the largest 
	common substring starting at positions $i$ in $x$ and $i+d$ in $y$'', where $i\in [n], d\in \pmk$. We refer
	to such queries as $\slide(d,i)$.

	Our $O(n+k^2)$ algorithm (Section~\ref{sec:streamingAlg}) uses essentially the paradigm of~\cite{LMS98}. It uses suffix trees computed for blocks
	of size $O(k)$ of the input to compose long slides from smaller slides of size $k$. This saves space, and in the case of
	large alphabets also time.
 	In our main algorithm (Section~\ref{sec:noSuffixTrees}) that runs in time $n+O(k^3)$  we do not compute suffix trees at all. Instead, 
	we implement the slide queries in the most na\"{\i}ve
	way using character by character comparison. This in general would lead to running time $O(nk)$
	as in Ukkonen's algorithm. To obtain the $n+O(k^3)$-time bound we refrain from performing 
	{\em long simultaneous} slides. By long simultaneous slides we mean slides $\slide(d,i)$ and $\slide(d',i')$
	for which the common substrings have a large overlap. Motivated by our work in~\cite{CGK16} we show
	that such slides imply periodicity of the underlying substrings. We leverage this periodicity to
	perform only one of the two slides. This generalizes to multiple simultaneous slides while
	having to pay only for one of the slides. The resulting algorithm turns surprisingly simple.
	
%	
%	Previous efficient algorithms obtaining running time $O(n\cdot \min\{\log n, \log |\Sigma|\}+k^2)$ such as~\cite{LMS98}
%	are of non-streaming nature (where $k$ is the bound on the edit distance between the given 
%	strings). This is due to the fact that in a preprocessing step they build a suffix tree 
%	for the given inputs strings. Clearly, this requires a linear amount of memory. 
%	
%	Our streaming algorithm scheme based on dividing the inputs strings into blocks of size 
%	which is independent of $n$, but only in $k$. Our algorithm proceeds by applying on each 
%	local block the well known algorithms for computing edit distance. In such a way,
%	the amount of memory required is only linear in the block size, as the memory required for
%	storing suffix trees can be recycled for each block.
%	Finally we have to combine the alignment on current block with the previous alignments.
%	
	\section{Preliminaries}
	\label{sec:prelim}

	In this section we some of the main tools we use later.
	For the sake of presentation, through out this paper we consider both the input strings to be of the same length. However one can easily generalize all the algorithms stated in this paper for two strings of different lengths.
	\paragraph{Notations:} 
	For an interval $[i,j]$ and a string $x\in \zo^n$ we denote by $x_{i,\dots,j}$ the substring 
	$x_i,\dots,x_j$ and for convenience if $i\le 0$ then $x_{i,\dots,j}=x_{1,\dots, j}$, if $j \ge n$ then $x_{i,\dots,j}=x_{i,\dots,n}$, and if $i>j$ then $x_{i,\dots,j}$ is the empty string. %For (sub)strings $a,b \in \zo^m$ we denote by $slide_{a,b}(d,i)$\footnote{Same as the $slide_d(i)$ function mentioned in Section~\ref{sec:LMS98} while $a$ and $b$ are the input strings.} the value $\max\{0\le q\le m : a_{i, \ldots, i+q}= b_{i+d,\dots, i+d+q}\}.$
	We say that a string $x\in \Sigma^n$ is \emph{periodic} with period size $p$ if there exits a pattern $w\in \Sigma^p$, $p\le n/2$, and an integer $\ell \ge 2$ such that $x= w^{\ell} z$, where $z$ is a prefix of $w$. In such a case, we refer $w$ as \emph{period} of $x$.
		
	\subsection{Dynamic programming algorithm for computing edit distance}
	A well known dynamic programming algorithm by~\cite{WF74} solves the problem in time $O(n^2)$. 
	The algorithm proceeds by computing an $(n+1)\times (n+1)$-sized edit distance matrix $D$ indexed by $\{0,\dots,n\}\times \{0,\dots,n\}$, where the 
	$D_{i,j}$-entry stores the value $\Delta(x_{1,\ldots,i}, y_{1,\ldots,j})$ if $i>0$ and $j>0$, 
	and $\max \{i,j\}$ otherwise. The algorithm fills in
	the matrix values in a lexicographic order according to the following recurrence formula:
	\[ 
	D_{i,j} =\min \left\{\begin{array}{lr}
	D_{i-1,j}+1, & \text{if } i>0\\
	D_{i,j-1}+1, & \text{if } j>0\\
	D_{i-1,j-1}+\delta_{x_i,y_j}, & \text{if }  i,j>0.
	\end{array}\right\} 
	\]
	Where $\delta_{x_i,y_j}=0$ if $x_i=y_j$ and $1$ otherwise. The recurrence formula stems from the fact
	that an optimal alignment for $x_{1,\ldots, i}$ and $y_{1,\dots , j}$ can either (i) optimally align 
	$x_{1,\ldots, i-1}$ and $y_{1,\dots , j}$ and delete $x_{i}$, or (ii) optimally align 
	$x_{1,\ldots, i}$ and $y_{1,\dots , j-1}$ and delete $y_j$ or (iii) optimally align 
	$x_{1,\ldots, i-1}$ and $y_{1,\dots , j-1}$ and pay additional cost of $\delta(x_i,y_j)$.

	When computing the matrix in a lexicographic order 
	the values $D_{i-1,j}, D_{i,j-1} \text { and }D_{i-1,j-1}$ are already known while computing $D_{i,j}$.
	Hence each entry is evaluated in $O(1)$ operations, which implies the $O(n^2)$ bound on the total running 
	time of the algorithm.
	
	The above description can be viewed pictorially as a graph, known as \emph{edit graph}. For any two strings $x,y \in \Sigma^n$ we define the edit graph $G$ as follows: the set of vertices contains all pairs $(i,j)$ where $0 \le i,j \le n$ and the set of edges contains an edge of cost $1$ from a point $(i,j)$ to $(i+1,j)$ where $0 \le i < n$ and $0 \le j \le n $ and from a point $(i,j)$ to $(i,j+1)$ where $0 \le i \le n$ and $0 \le j < n $. The edge set also contains an edge of cost $\delta(x_i,y_j)$ from a point $(i-1,j-1)$ to $(i,j)$ where $0 < i \le n$ and $0 < j \le n $ (See Figure~\ref{fig:editgraph}). Note that in the above description, the cell $D_{i,j}$ corresponds to the point $(i,j)$ in the edit graph. The problem of computing edit distance for strings $x,y$ translates into finding the cost of a shortest path starting at $(0,0)$ and ending at $(n,n)$.
	\begin{center}
		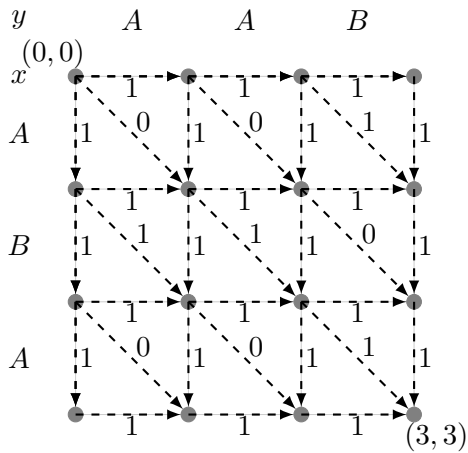
\begin{figure}[ht]
\centering
\begin{tikzpicture}[scale=1.5,shorten >=1mm,>=latex]
 \tikzstyle gridlines=[color=black!20,very thin]
 %old scoring scheme
 \draw[color=black!20,very thin] (0,0) grid (3,3);

 \foreach \x in {0,...,3}
  \foreach \y in {0,...,3}
   {
     \draw[fill,color=black!50] (\x,\y) circle (0.65mm);
   }
 \foreach \x in {0,...,2}
  \foreach \y in {1,...,3}
   {
     \draw[->,dashed,thick] (\x,\y)--(\x+1,\y);
     \draw[->,dashed,thick] (\x,\y)--(\x,\y-1);
     \draw[->,dashed,thick] (\x,\y)--(\x+1,\y-1);
   }
   \foreach \y in {1,...,3}
   {
     \draw[->,dashed,thick] (0,\y)--(0,\y-1);
     \draw[->,dashed,thick] (3,\y)--(3,\y-1);
     \node at (0.5,\y-0.1) {$1$};
     \node at (1.5,\y-0.1) {$1$};
     \node at (2.5,\y-0.1) {$1$};
   }
   \foreach \x in {0,...,2}
   {
     \draw[->,dashed,thick] (\x,0)--(\x+1,0);
     \draw[->,dashed,thick] (\x,3)--(\x+1,3);
     \node at (\x+0.1,2.5) {$1$};
     \node at (\x+0.1,1.5) {$1$};
     \node at (\x+0.1,0.5) {$1$};
   }
   %string x
   \node at (-0.5,3) {$x$};
   \node at (-0.5,0.5) {$A$};
   \node at (-0.5,1.5) {$B$};
   \node at (-0.5,2.5) {$A$};
   %string y
   \node at (-0.5,3.5) {$y$};
   \node at (0.5,3.5) {$A$};
   \node at (1.5,3.5) {$A$};
   \node at (2.5,3.5) {$B$};
   %cost function
   \node at (0.6,2.6) {$0$};
   \node at (1.6,2.6) {$0$};
   \node at (2.6,2.6) {$1$};
   \node at (0.6,1.6) {$1$};
   \node at (1.6,1.6) {$1$};
   \node at (2.6,1.6) {$0$};
   \node at (0.6,0.6) {$0$};
   \node at (1.6,0.6) {$0$};
   \node at (2.6,0.6) {$1$};
   \node at (0.5,-0.1) {$1$};
   \node at (1.5,-0.1) {$1$};
   \node at (2.5,-0.1) {$1$};
   \node at (3.1,2.5) {$1$};
   \node at (3.1,1.5) {$1$};
   \node at (3.1,0.5) {$1$};
   \node at (-0.2,3.2) {$(0,0)$};
   \node at (3.2,-0.2) {$(3,3)$};

\end{tikzpicture}
\caption{An example of edit graph for string $x,y \in \{A,B\}^3$}
   \label{fig:editgraph}
   
\end{figure}
	\end{center}

	\subsection{An $O(n +k^2)$-time algorithm for computing edit distance}
	\label{sec:LMS98}
	Suppose now that we are guaranteed that $\Delta(x,y) \le k$, can we derive an algorithm 
	with a better running time? Ukkonen~\cite{UKK85} provided an $O(kn)$ time algorithm by 
	realizing that in fact the algorithm does not have to compute all the matrix $D$ values but 
	only the ones that reside within the diagonals $\pmk$, 
	where the \emph{$d$-diagonal} is the set  of pairs $(i,j)$ such that $j=i+d$. 
	Sometimes we refer to the $0$-diagonal as the \emph{main diagonal}. 
	Notice, the values along each diagonal form a non-decreasing sequence of integers from range $\{0,\dots,n\}$.
	For $h\in \{0,\dots,n\}, d\in \{-h,\dots,h\}$ we define $F^h(d)$ as the furthest point on the diagonal 
	$d$ that can be reached within $h$ edit operations, formally:
	\[ F^h(d)= \max\set{i:\, D_{i,i+d} = h}.\]
	The values $F^h(d)$ fully determine the matrix, and provided that $\Delta(x,y) \le k$, the values $F^h(d)$ for $h\le k$
	determine the relevant content of the diagonals $\pmk$.
%	Developing on the same idea in a subsequent work Landua et al.~\cite{LMS98} provided an $O(n+k^2)$ algorithm for computing the edit distance. 
%	The idea behind the algorithm in~\cite{UKK85} is to compute the edit distance matrix in a 
%	greedy order, so only $O(k^2)$ values of the edit distance matrix would be eventually evaluated. 
	In~\cite{LMS98} authors were able to compute each of these $F^h(d)$ values within $O(1)$ operations by 
	preprocessing the input (by building a corresponding generalized suffix tree) resulting in an 
	$O(n+k^2)$ algorithm for computing the edit distance. In the sequel, we briefly sketch their algorithm. 
	%We divide the description into two parts: first we discuss the choice of the $k^2$ values that the algorithm 
	%computes, and then we explain how to compute them efficiently.
	
	For a diagonal $d\in \{-h,\dots,h\}$, and for a row $i\in [n]\cup \set{0}$ we denote by $\slide_{x,y}(d,i)$ the furthest
	row $q \ge i$ that can be reached from row $i$ on diagonal $d$ while incurring no
	edit cost, formally:
	
	\[ \slide_{x,y}(d,i)= \max\{q :\, i\le q \le \min\{n,n-d\} \;\&\; x_{i+1, \ldots, q}= y_{i+d+1,\dots, q+d}\}.\]
	Notice that the slide {\em does not} compare $x_i$ against $y_i$.
        To illustrate the definitions, consider the value $F^0(0)$, this corresponds to the
	size of the largest shared prefix of $x$ and $y$, which equals $\slide_{x,y}(0,0)$.
	Furthermore, observe that whenever $x_{i+1}\neq y_{i+1}$, the value $\slide_{x,y}(d,i)$ 
	is $i$ (as $x_{i+1,...,q}=x_{i+1,i}$ which is the empty string).
	Generally, the following recurrence formula holds (see~\cite{LMS98} Lemma~2.8) for $h>0$ and $d\in\{-h,\dots,h\}$:
	\begin{eqnarray}\label{eqn:LMSrec}
	F^h(d)= \slide_{x,y} \left(d,\max S_{d,h}\right)
	\end{eqnarray}  
	\begin{eqnarray*}
\text{ where }  S_{d,h} \text{ is the set containing }\left\{\begin{array}{lr}
	F^{h-1}(d)+1, & \text{if } |d| \le h-1\\
	F^{h-1}(d+1)+1, & \text{if } d+1 \le h-1\\
	F^{h-1}(d-1), & \text{if } d-1 \ge -(h-1),
	\end{array}\right\}
	\end{eqnarray*}  
	For convenience $S^{0}(0)=\{0\}$.
	The intuition behind the recurrence is that an alignment of $x_{1,\dots, F^h(d)}$ and 
	$y_{1,\dots, F^h(d)+d}$ of cost $h$ can be obtained by one of 
	the three possibilities: 
	\begin{itemize}
		\item Optimally align $x_{1,\ldots,F^{h-1}(d)}$ and $y_{1,\ldots,F^{h-1}(d)+d}$,
		align $x_{F^{h-1}(d)+1}$ and $y_{F^{h-1}(d)+d+1}$ (this would cost an additional edit cost as these 
		values must be different) and slide on diagonal $d$, starting at $F^{h-1}(d)+1$, or
		\item optimally align $x_{1,\ldots,F^{h-1}(d+1)}$  and $y_{1,\ldots,F^{h-1}(d+1)+d+1}$,
		insert the $(1+F^{h-1}(d+1))$-th character of  $x$ and slide on diagonal $d$ starting at $F^{h-1}(d+1)+1$, or
		\item optimally align $x_{1,\ldots,F^{h-1}(d-1)}$  and $y_{1,\ldots,F^{h-1}(d-1)+d-1}$, 
		insert the $(F^{h-1}(d-1)+d)$-th character of $y$ and slide on diagonal $d$ starting at $F^{h-1}(d-1)$.
	\end{itemize} 

	%\enote{I'm not sure whether this provides intuition or just complicates things.}
        We define values $c_{d,h}$ to be the maximum size of $S_{d,h}$, i.e., the number of values contributing to $S_{d,h}$ counting
multiplicity. Clearly, $c_{d,h}=3$ when $\abs{d} \le h-2$, $c_{d,h}=2$ when $\abs{d}= h-1 \ge 1$, and $c_{d,h}=1$ when $\abs{d}=h$ or $h=1$. (We define $c_{0,0}=1$
for convenience.)

	We define the $h$-wave as the set of points: $F^{h}(-h), \ldots F^{h}(0), \ldots, F^{h}(h)$.
	The algorithm proposed by~\cite{UKK85} proceeds by computing first the $0$-wave, then the 
	$1$-wave and so on. The algorithm terminates whenever it encounters a wave $e$ such that 
	$F^{e}(0)=n$. The final output of the algorithm is $\Delta(x,y)=e$. 
	
	To obtain the upper bound on the running time of the algorithm the authors in~\cite{LMS98} show that 
	the computation of $\slide_{x,y}(d,i)$ can be done in $O(1)$ operations. This is done by first preprocessing 
	the input and building a generalized suffix tree for the string $x\$y\#$ where $x,y$ are the input strings, and $\$, \#$ are characters that do not belong to the alphabet $\Sigma$ and a data structure that answer a query for lowest common ancestor for this suffix tree in $O(1)$ time. Using that data structure they are able to evaluate a query $\slide_{x,y}(d,i)$ in $O(1)$-operations, see Section 2.3 in~\cite{LMS98}.

	The above implementation of~\cite{LMS98} is done in a non-streaming fashion, since the 
	suffix tree computation requires $O(n)$ space. A natural approach to bypass this obstacle 
	is by dividing the input strings into blocks, compute a suffix tree for each block 
	separately so that we can compute the slide function on each block efficiently. However, 
	the aforementioned implementation of~\cite{LMS98} computes the $F^{h}(d)$ values in waves. 
	Therefore, if for some value of $h$
	the values $\set{F^{h}(d)}_{d\in \pmk}$ are far apart we might need to go back and forth between  
	different blocks. Thus, we first present an algorithm for computing the values $F^d(h)$ in a different
        order. In our implementation slides with smaller starting row will be computed earlier. 
	This algorithm is given in Section~\ref{sec:topDownLMS}. From this algorithm we derive a 
	streaming algorithm in Section~\ref{sec:streamingAlg}. Finally, we present our main algorithm  that does 
	not computes suffix trees at all in Section~\ref{sec:noSuffixTrees}. 
	
	We present our algorithms as computing only the edit distance. All our algorithms compute the values of $F^h(d)$ for all $\abs{d} \le h \le k$. 
From these values, one can easily reconstruct an optimal alignment of $x$ and $y$ in time $O(k)$. Storing these values requires $O(k^2)$ space. 
If we are interested only in the edit distance (the number) then Algorithms~\ref{alg:streamingLMS} and~\ref{alg:noSuffixTree} 
need only space $O(k)$ otherwise they need space $O(k^2)$.

	\section{Towards a Streaming Algorithm: a Row Modification of~\cite{LMS98}}
	\label{sec:topDownLMS}
Our goal is to design a modification of the~\cite{LMS98} algorithm that will perform all slide operations
in the order of increasing starting row. Our algorithm will determine $F^h(d)$ for all values of $d$ and $h$ such that
$\abs{d} \le h \le k$, where $k$ is a provided parameter.
To determine $F^h(d)$ using (\ref{eqn:LMSrec}), we need to take the maximum row of up-to three possible
candidate rows obtained from values of $F^{h-1}(d-1), F^{h-1}(d)$ and $F^{h-1}(d+1)$, and perform a slide on diagonal $d$ from that row.
Our algorithm will maintain $n+1$ lists, $L_0,\dots, L_n$, each list containing entries of the form $(d,h)$.
The meaning of an entry $(d,h)$ on a list $L_i$ is that the slide to compute $F^d(h)$ should possibly start at row $i$, i.e.,
$i$ is one of the three values $F^{h-1}(d-1), F^{h-1}(d)+1$ or $F^{h-1}(d+1)+1$.
At a given time, each $(d,h)$ is contained in lists $L_0,\dots, L_n$ at most once, in particular, it appears
on the list $L_i$ that corresponds to the maximum starting row $i$ for the slide of $F^h(d)$ computed thus far. An array $D$ of 
size $(2k+1) \times (k+1)$ is used to point to this unique occurrence of entry $(d,h)$ on these lists. An array $C$ of the same
dimension is used to count the number of candidate rows for the slide of $F^h(d)$ computed thus far, i.e., the number of times
entry $(d,h)$ was put onto these lists.

Additionally, the algorithm stores a $(2k+1) \times (k+1)$ array $L^{h}(d)$ of values $F^h(d)$, and a generalized suffix tree 
for the concatenation of $x$ and $y$ together with a data structure to answer the lowest common ancestor query of this suffix tree in $O(1)$ time.

%	We would like to derive an algorithm such that the order of performing each
%	slide depends on the starting row of the slide: slides with smaller starting row would
%	be performed earlier. 
%	For diagonal $d$ and edit operation $h$ we define the successors list of $(d,h)$ as follows:
%	
%	
%	\begin{definition}
%		For $h\ge 0$ and $d\in \pmk$ we define the $L$-successors set $LS_{d,h}$ as the union of the following values: 
%		$L^{h-1}(d)$ provided that $h>0$;  $L^{h-1}(d+1)$ provided that $h>0$, $ d+1 \in \set{-h+1,\dots , h-1}  $ and $d<k$;
%		
%		Similarly we define the $F$-values successors $FS_{d,h}$ as the union of: $F^{h-1}(d)+1$ provided that $h>0$;  $F^{h-1}(d+1)+1$ provided that $h>0$ , $ d+1 \in \set{-h+1,\dots , h-1}  $ and $d<k$;
%		and $F^{h-1}(d-1)$ provided that $h>0$, $ d-1 \in \set{-h+1,\dots , h-1}$ and $d>-k$.
%		
%	\end{definition}
	
%	\textbf{Data Structures Maintained by The Algorithm}
%	\begin{itemize}
%		\item The algorithm maintains $n+1$ lists, $L_0,\dots, L_n$, each list contains an entry of
%		the form $(d,h)$.
%		\item  It also maintains an array $D$ of size $(2k+1) \times (k+1)$, 
%		where its $(d,h)$-th entry contains a pointer to the last entry of the form $(d,h)$ 
%		that was stored in some $L_i$. 
%		\item It stores $(2k+1) \times (k+1)$ values, denoted 
%		$L^{h}(d)$, storing rows $i\in \set{0,\dots,n}$. 
%		\item A generalized suffix tree for the concatenation of $x$ and $y$ and a data structure to answer the lowest common ancestor query of this suffix tree in $O(1)$ time.
%		
%	\end{itemize}

	The algorithm proceeds as follows. It starts by adding the value $(0,0)$ 
	to $L_0$, and performs a slide on diagonal $0$, starting at row $0$. Now suppose that this 
	slide ended at row $q$, then the algorithm first updates $L^{0}(0)=q$ and then adds the entries $(1,1)$ into $L_{q}$ and $(0,1)$ $(-1,1)$ into $L_{q+1}$. Now for every row $i=0,\dots, n$: The algorithm scans the list $L_i$, for each entry $(d,h)$ in the list 
	it checks whether all the required values for slide $F^h(d)$ have been computed yet. If they already 
	have been computed, then it performs a slide on diagonal $d$ starting at row $i$. Assuming the 
	slide ends at row $q$, the algorithm sets $L^{h}(d)=q$ and inserts $(d+1,h+1)$ into $L_q$ and $(d,h+1)$, $(d-1,h+1)$ into 
	$L_{q+1}$. 
	
	Pseudo-code for the algorithm is below:
	%For $d,h$ we denote $S_{L,h,d}$ the successors of $L^{h}(d)$ that is the set 
	%$\set{L^{h-1}(d)+1,L^{h-1}(d-1),L^{h-1}(d+1)+1 }$ and similarly we define $S_{F,h,d}$.

	%The modified algorithm would be easier to implement later in a streaming fashion.

	\begin{algorithm}[H]
		\Input{$x,y\in \zo^n$, and a parameter $k\in [n]$ such that $\Delta_e(x,y)\le k$.}
		\Output{$\Delta_e(x,y)$ }

		\tcp{Initialization:}

		Build a generalized suffix tree for the concatenation of $x$ and $y$ in order to evaluate queries
		$slide_d(i)$ using $O(1)$ operations, as in~\cite{LMS98}. 	
		
		For $i=0,\dots,n$, initialize each list $L_i$ to be empty; 
		
		For all integers $d,h$ such that $\abs{d}  \le h \le k$, set $D(d,h)=\Null$ and $C(d,h)=0$;
		
		Invoke $\Update(0,0,0)$; 

		\tcp{Main Loop:}		

		\For{$i=0,\dots,n$}{
			
			\While{$L_i$ is not empty}{ 
				Pick $(d,h)$ from $L_i$ and remove it from the list;
				
				%		\If{$L^{h-1}(d)\neq 0 \, \, \& \,\, L^{h-1}(d-1)\neq 0 \, \, \& \,\, L^{h-1}(d-1)\neq 0$} 
				\If{$C(d,h)=c_{d,h}$ }  
				{
					$q=\slide_{x,y}(d,i)$;
					
%					$\mismatch(d,q,h)$;
					$L^h(d)=q$;
		
					\If{$h<k$}{
						$\Update(d,q+1,h+1)$;
			
						\lIf{$d<k$}
						{$\Update(d+1,q,h+1)$}
			
						\lIf{$d>-k$}
						{$\Update(d-1,q+1,h+1)$}
					}
					
				}
			}
		}

		Output the smallest $h\le k$ such that $L^h(0)=n$.
		\caption{A row modification of the~\cite{LMS98} algorithm}\label{alg:topDownLMS}
	\end{algorithm}
	
	\begin{procedure}[H]
	\hrule height .7pt\vspace{1mm}
	\TitleOfAlgo{$\Update(d,i,h)$}\label{alg:update}
	\hrule\vspace{1mm}
               % \lIf{$\abs{d}>h$ or $h>k$}{return;}     %% OK, we don't need this b/c mismatch does the checks - MK

                Increment $C(d,h)$ by one.

		\If{$D(d,h)=\Null$ or $D(d,h)$ points to an entry in $L_{i'}$ where $i'<i$}
		{
			\If{$D(d,h)\neq \Null$}
			{Delete the current node pointed to by $D(d,h)$ from $L_{i'}$;}

			Add $(d,h)$ into the back of $L_i$;
			
			Set $D(d,h)$ to point to the new entry $(d,h)$;
		}
	\hrule\vspace{1mm}
	\end{procedure}
	
%	\begin{algorithm}[H]
%		
%		
%		
%		
%		\caption{$\mismatch(d,q,h)$}\label{alg:mismatch}
%	\end{algorithm}

	The algorithm satisfies two key properties captured in the next lemma. 

	\begin{lemma}\label{lemma:globlaLMSCorrectness}
		Let $i\in \set{0,\dots,n}$ and let $d,h$  be integers such that $\abs{d}\le h \le k$.
\begin{enumerate}
\item $i\in S_{d,h}$ iff $(d,h)$ appears on the list $L_i$ during the run of the algorithm iff $\Update(d,i,h)$ is invoked during the run of the algorithm.
\item If $i=\max S_{d,h}$ then while processing list $L_i$, the value of $L^h(d)$ is set to $F^h(d)$. 
\end{enumerate}
	\end{lemma}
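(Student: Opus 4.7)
The plan is to prove both parts of the lemma simultaneously by strong induction on $h$. The overall strategy is to first verify that the set of rows passed to $\Update(d,\cdot,h)$ matches $S_{d,h}$ exactly, and then to argue that the row-by-row processing order guarantees the slide is performed at the right moment with the right starting row.

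For the base case $h=0$, only $d=0$ is relevant, and $S_{0,0}=\{0\}$ by convention. The initialization invokes $\Update(0,0,0)$, which inserts $(0,0)$ into $L_0$ and sets $C(0,0)=1=c_{0,0}$. When the main loop reaches $L_0$, $(0,0)$ is picked, the counter matches, and the slide yields $q=\slide_{x,y}(0,0)=F^0(0)$, which is stored in $L^0(0)$. Both parts of the lemma hold in this case.

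For the inductive step, assume the full statement for $h-1$. For part 1, I will observe that every invocation of $\Update(d,\cdot,h)$ originates from exactly one of the three update statements executed after a slide for some $(d',h-1)$ with $d'\in\{d-1,d,d+1\}$ is completed. Applying the inductive hypothesis part 2 to each such $(d',h-1)$, the corresponding slide is performed exactly once and yields $F^{h-1}(d')$, so the three rows passed to $\Update(d,\cdot,h)$ are exactly $F^{h-1}(d)+1$, $F^{h-1}(d-1)$, and $F^{h-1}(d+1)+1$, subject to the same boundary conditions $|d|\le h-1$, $d-1\ge -(h-1)$, and $d+1\le h-1$ that appear in the definition of $S_{d,h}$. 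This matches the recurrence (\ref{eqn:LMSrec}) exactly, so $\Update(d,i,h)$ is invoked iff $i\in S_{d,h}$, and reading the \Update~routine, $(d,h)$ appears on $L_i$ at some moment iff such an invocation is made.

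For part 2, observe that after all calls $\Update(d,i,h)$ for $i\in S_{d,h}$ have occurred, the pointer $D(d,h)$ sits on $L_{\max S_{d,h}}$ and $C(d,h)=c_{d,h}$. The timing claim is that each call $\Update(d,i,h)$ is issued during the processing of some row $r=\max S_{d',h-1}\le F^{h-1}(d')\le \max S_{d,h}$, so every call has happened by the time the main loop starts processing row $\max S_{d,h}$. Therefore, when $(d,h)$ is picked from $L_{\max S_{d,h}}$, the test $C(d,h)=c_{d,h}$ passes, the slide is performed, and $L^h(d)=\slide_{x,y}(d,\max S_{d,h})=F^h(d)$ by (\ref{eqn:LMSrec}). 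The main obstacle I anticipate is ruling out a premature pick of $(d,h)$ from some $L_i$ with $i<\max S_{d,h}$, which would cause the slide to be skipped; I plan to handle this by noting that the call $\Update(d,\max S_{d,h},h)$ happens at a source row $r\le F^{h-1}(d^*)$, and the inductive hypothesis combined with the row-monotone processing order ensures the entry is relocated to $L_{\max S_{d,h}}$ before $L_i$ is reached.
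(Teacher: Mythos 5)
Your proof follows essentially the same route as the paper's own argument: induction on $h$, using part~2 at level $h-1$ to show that the rows passed to $\Update(d,\cdot,h)$ are exactly the elements of $S_{d,h}$, and then a timing argument that all such calls occur at processing rows at most $\max S_{d,h}$, so the counter is full when $(d,h)$ is picked from $L_{\max S_{d,h}}$ and the slide is performed from the correct row. Your treatment of the premature-pick issue is, if anything, slightly more explicit than the paper's sketch, so no substantive difference to report.
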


\begin{proof}
        We provide a brief sketch of an argument that proceeds by induction on $h$.
        Notice that $\Update(d,i,h)$ is only invoked for values satisfying $\abs{d}\le h \le k$. Also, the only way for $(d,h)$ to get on some list $L_i$
        is by invocation of $\Update(d,i,h)$. This proves the second 'iff' of the first part. For values $i=d=h=0$, the first property is true because
        $\Update(0,0,0)$ is invoked during the initialization, and the second property is true because $(0,0)$ is on list $L_0$ after that, $C(0,0)=c_{0,0}=1$, 
        so $\slide_{x,y}(0,0)$ will be eventually computed and $L^0(0)$ will receive the value of that slide which corresponds to $F^{0}(0)$.

        The second property claims that the value of $L^h(d)$ is set to $F^h(d)$. In order 
        to compute $F^h(d)$ we need to know up-to three values $F^{h-1}(d-1), F^{h-1}(d)+1$ and $F^{h-1}(d+1)+1$, and perform
        a slide along diagonal $d$ from their maximum. Assuming inductively that for $L^{h-1}(d-1), L^{h-1}(d)$ and $L^{h-1}(d+1)$ the second property holds,
        when the value of $L^{h-1}(d-1)$ is set to $F^{h-1}(d-1)$, $\Update(d,F^{h-1}(d-1),h)$ is invoked and indeed, $F^{h-1}(d-1)\in S_{d,h}$.
        Similarly, when $L^{h-1}(d)$ is set to $F^{h-1}(d)$, $\Update(d,F^{h-1}(d)+1,h)$ is invoked and $F^{h-1}(d)+1 \in S_{d,h}$.
        Last, when $L^{h-1}(d+1)$ is set to $F^{h-1}(d+1)$, $\Update(d,F^{h-1}(d+1)+1,h)$ is invoked and again, $F^{h-1}(d+1)+1\in S_{d,h}$.
        (These updates happen provided $\abs{d}\le h \le k$.) This means that the only way for $(d,h)$ to get on some list $L_i$ is if $i \in S_{d,h}$.
        Moreover, after the last value of the three $L^{h-1}(d-1), L^{h-1}(d)$ and $L^{h-1}(d+1)$ is computed during processing of some list $L_i$, $C(d,h)=c_{d,h}$. 
        Furthermore,  $i \le \max S_{d,h}$ as the last $\Update(d,j,h)$ happens with $i \le j \le \max S_{d,h}$. Hence, $(d,h)$ will appear on the list $L_i$ that is currently processed or 
        on some list that will be processed later. Once we reach $(d,h)$ on list $L_{\max S_{d,h}}$,  $C(d,h)=c_{d,h}$, so the $\slide_{x,y}(d,\max S_{d,h})$ is computed and $L^h(d)$
        is set to $F^h(d)$. This finishes the argument.
\end{proof}

	The correctness of the output of the algorithm follows from the second part of the lemma.

	\paragraph{Complexity analysis:}
	Let us now discuss the time complexity of Algorithm~\ref{alg:topDownLMS}. Time complexity of Algorithm~\ref{alg:topDownLMS} is determined by two main tasks. First, constructing the generalized suffix tree and lowest common ancestor data structure as in~\cite{LMS98} requires $O(n\cdot \min\{\log n, \log |\Sigma|\})$ time~\cite{Gus97}. Second, we need to process points present in the lists $L_i$ for $0\le i \le n$. As explained above, the processing of each point of the form $(d,h)$ takes $O(1)$ time. By Lemma~\ref{lemma:globlaLMSCorrectness} the total number of points added in the lists throughout the run of the algorithm is bounded by $O(k^2)$ (each point can be inserted at most $3$ times). Hence the overall time complexity of Algorithm~\ref{alg:topDownLMS} is $O(n\cdot \min\{\log n, \log |\Sigma|\}+k^2)$.
	
	Algorithm~\ref{alg:topDownLMS} requires space for three purposes. First, to construct and store the generalized suffix tree we need $O(n)$ space as in~\cite{LMS98}. Second, to maintain the lists $L_0,\dots,L_n$ and the array $D$ we use space of size $O(n+k^2)$. Third, we need to maintain all the values of $L^h(d), C(h,d)$ for $h\in \set {0, \dots, k}, d\in \pmk$ and that requires total $O(k^2)$ space. Hence total space requirement is $O(n+k^2)$.
		
	\section{An $O(n+k^2)$-time streaming algorithm for computing edit distance}
	\label{sec:streamingAlg}
	\label{slg:streamingAlg}
	In this section we prove Theorem~\ref{thm:computeEdit}. The algorithm we present is based on the algorithm described in Section~\ref{sec:topDownLMS}. The bottleneck of that algorithm was
the space needed to store the suffix tree data structure for efficient implementation of slides. To eliminate the bottleneck, we divide the input strings $x$ and $y$
	into (overlapping) blocks of length $O(k)$ and process the input block by block.
	
	For each block we will build a suffix tree data structure so that each slide operation within the block will be evaluated in $O(1)$-time. Slide operations that span
several blocks will be split into pieces of size at most $k$. When processing the $j$-th block we will process all continuing slide operations and all slide operations that start
on rows between $jk$ and $(j+1)k-1$ of the original edit distance matrix for $x$ and $y$. Instead of maintaining lists $L_0,\dots,L_n$ we will only maintain lists $L_0,\dots,L_k$ that will
contain the starting positions of slides
within the current block. The list $L_0$ will hold the slides continuing from the previous block, $L_k$ will maintain the slides that should continue into the next block. Whenever
a slide operation continues past the $(j+1)k$ row, we will put it on the list $L_k$. After finishing the current block we will move list $L_k$ to $L_0$ and we will process the next block.

	\begin{center}
		\begin{figure}[ht]
\centering
\begin{tikzpicture}[scale=.7,shorten >=.65mm,>=latex]
 \tikzstyle gridlines=[color=black!20,very thin]
 \draw[color=black!20,very thin] (0,0) grid (8,8);

 \foreach \x in {0,...,8}
  \foreach \y in {0,...,8}
   {
     \draw[fill,color=black!50] (\x,\y) circle (0.65mm);
     %\draw[->,dashed,thick] (\x,\y)--(\x+1,\y);
     %\draw[->,dashed,thick] (\x,\y+1)--(\x,\y);
   }
   \foreach \x in {0,...,7}
  \foreach \y in {1,...,8}
   {
     \draw[->,dashed,thick] (\x,\y)--(\x+1,\y);
     \draw[->,dashed,thick] (\x,\y)--(\x,\y-1);
     \draw[->,dashed,thick] (\x,\y)--(\x+1,\y-1);
   }
   \foreach \y in {1,...,8}
   {
     \draw[->,dashed,thick] (0,\y)--(0,\y-1);
     \draw[->,dashed,thick] (8,\y)--(8,\y-1);
   }
   \foreach \x in {0,...,7}
   {
     \draw[->,dashed,thick] (\x,0)--(\x+1,0);
     \draw[->,dashed,thick] (\x,8)--(\x+1,8);
   }

 %\draw[step=1cm,color=black!30,very thin] (0,-.25) grid (8.25,1.25);

 %\tikzstyle{every node}=[draw,shape=circle,fill=red!20]

%  \draw[very thin,fill=white!70!gray] (1,0) rectangle (2,1);
  \draw[fill=white!70!gray] (0,8) rectangle (5,5);
  \draw[fill=white!70!gray] (0,6) rectangle (7,3);
  \draw[fill=white!70!gray] (2,4) rectangle (8,1);
  \draw[fill=white!70!gray] (4,2) rectangle (8,0);
  \draw[-,dashed,thick] (0,5)--(5,5);
  \draw[-,dashed,thick] (5,6)--(5,5);
  \draw[-,dashed,thick] (2,3)--(7,3);
  \draw[-,dashed,thick] (7,4)--(7,3);
  \draw[-,dashed,thick] (4,1)--(8,1);
  \draw[-,>=latex,thick] (0,6)--(6,0);
  \draw[-,>=latex,thick] (2,8)--(8,2);
  \draw[-,>=latex,thick] (0,8)--(8,0);
  %\draw[-,>=latex,thick] (0,7)--(7,0);
  %\draw[-,>=latex,thick] (1,8)--(8,1);
  \node at (2,4.5) {$-k$};
  \node at (4,4.5) {$0$};
  \node at (6,4.5) {$k$};
  %\draw[fill,color=black] (4,4) circle (0.65mm);
  %\draw[fill,color=black] (4,6) circle (0.65mm);
  %\draw[fill,color=black] (2,4) circle (0.65mm);

  %\node at (1,7) {$G_1$};
  %\node at (5,3) {$G_2$};
  %\draw[-,dashed,thick] (4,4)--(4,5);
  %\node at (3.7,4.5) {$L_v(2,2)$};
  %\draw[-,dashed,thick] (4,4)--(5,4);
  %\node at (4.5,3.5) {$L_h(2,2)$};

\end{tikzpicture}
\caption{Sketch of a block-wise division of edit graph.}
   \label{fig:block}
\end{figure}
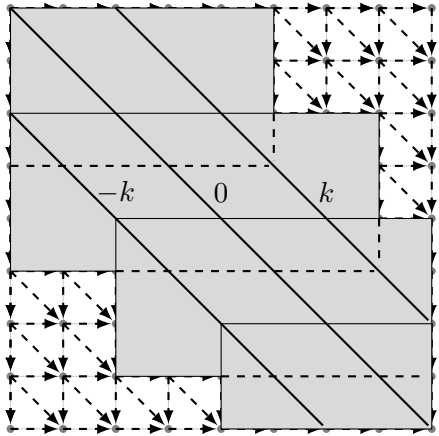

	\end{center}

       The $j$-th block of $x$ will consist of $x_{jk+1,\dots , (j+3)k+1}$ and of $y$ will be
			$y_{(j-1)k+1,\dots , (j+2)k+1}$. This provides enough context so that slides on diagonals $\{-k,\dots,k\}$ on rows between $jk$ and $(j+1)k-1$ of the original
matrix for $x$ and $y$ can be computed from slides on these blocks of $x$ and $y$ (see Fig.~\ref{fig:block}). Diagonal $d+k$ of the edit distance matrix of $x'$ and $y'$ corresponds
to diagonal $d$ of the edit distance matrix of $x$ and $y$.
      
	The pseudo-code of our algorithm is below. The procedure	$\Update()$ remains the same as in the previous section.

	\begin{algorithm}[H]
		\Input{$x,y\in \zo^n$, and a parameter $k\in [n]$ such that $\Delta_e(x,y)\le k$.}
		\Output{$\Delta_e(x,y)$ }
		
		\tcp{Initialization:}
		For $i=0,\dots,k$, initialize each list $L_i$ to be empty; 
		
		For all integers $d,h$ such that $\abs{d}  \le h \le k$, set $D(d,h)=\Null$ and $C(d,h)=0$;
		
		Invoke $\Update(0,0,0)$;
		%otherwise initialize $L^h(d)=L^h(-d)=-1$, and also set $L^{-1}(0)=L^{-1}(-1)=L^{-1}(1)=-1$; 

		\tcp{Main loop over blocks of size $O(k)$:}		
		\For{$j=0,\dots,\lceil n/k \rceil -1$ }
		{

			Let $x'= x_{jk+1,\dots , (j+3)k+1}$ and 
			$y'=y_{(j-1)k+1,\dots , (j+2)k+1}$;

 			\lIf{$j=0$}{ $y'=0^k \cdot y'$}

			Build a generalized suffix tree for the concatenation of $x',y'$ in order to evaluate queries
			$slide_{x',y'}(d,i)$ using $O(1)$ operations, as in~\cite{LMS98}. 	

			\For{$i=0,\dots, k-1$} 
			{  \tcp{Process each list $L_i$ within the current block}
							
				\While{$L_i$ is not empty}{ 
					Pick the first entry $(d,h)$ stored in $L_i$ and remove it from the list;

					%		\If{$L^{h-1}(d)\neq 0 \, \, \& \,\, L^{h-1}(d-1)\neq 0 \, \, \& \,\, L^{h-1}(d-1)\neq 0$} 
					%					\If{$\forall v \in SS_{d,h} \, : \, v\neq -1$}
%					\If{$C(d,h)=3$ or ($C(d,h)=2$ and $d\in \set {-h,h}$) or ($C(d,h)=1$, $d\in \set{-1,0,1}$ and $h=1$) or ($d=h=0$)} 
	
					\If{$C(d,h)=c_{d,h}$ }  
					{
						$q=\slide_{x',y'}(k+d,i)$;
						
						\If{$q \ge k$}{\tcp{Partial slide $\rightarrow$ the slide will continue during the next block}
							Insert $(d,h)$ into the list $L_{k}$;
							
							Set $D(d,h)$ pointing to the new entry $(d,h)$;
						}
						\Else{

							$L^h(d)=q+jk$;
			
							\If{$h<k$}{
								$\Update(d,q+1,h+1)$;
					
								\lIf{$d<k$}
								{$\Update(d+1,q,h+1)$}
				
								\lIf{$d>-k$}
								{$\Update(d-1,q+1,h+1)$}
							}
						}
					}
					
				}
			}
			Move the list $L_{k}$ to be $L_0$; \tcp{All lists except for $L_{0}$ are empty.}
		}

		Output the smallest $h\le k$ such that $L^h(0)=n$.
		\caption{A streaming algorithm for computing edit distance}
		\label{alg:streamingLMS}
	\end{algorithm}
	
	The correctness of our new algorithm follows from the correctness of the algorithm in Section~\ref{sec:topDownLMS}. The difference between the two algorithms
	lies only in dividing longer slides into smaller pieces.
		
	\paragraph{Complexity Analysis:}
	Time complexity of Algorithm~\ref{alg:streamingLMS} follows from the following claim.
	\begin{lemma}
		\label{claim:streamTime}
		In every iteration $j\in \{0,\dots, \lceil n/k \rceil-1\}$, the total number of steps performed is bounded by $O(k\cdot \min\{\log k, \log |\Sigma|\}+k_j)$, where $\sum_j k_j = O(k^2)$.%$k_j= \card{ \sett {L^h(d)}{jk+1 \le L^{h}(d) \le (j+1)k}}$.
	\end{lemma}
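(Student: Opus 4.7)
The plan is to split the per-iteration work into three parts: (i) building the generalized suffix tree and LCA data structure for the block strings $x'$ and $y'$; (ii) handling the entries in $L_0$ carried over from the previous block; and (iii) handling the entries added to any list during this iteration via $\Update$. I will take $k_j$ to be the total number of $\Update$ invocations performed during iteration $j$. Part (i) runs in time $O(k\cdot\min\{\log k,\log|\Sigma|\})$ since $|x'|,|y'|=O(k)$, using the standard construction~\cite{Gus97}. Every entry picked off a list is handled in $O(1)$ time --- an $O(1)$ slide query via the LCA structure, a counter comparison, and either a push onto $L_k$ or an assignment to $L^h(d)$ followed by at most three further $\Update$ calls --- so the combined cost of parts (ii) and (iii) is proportional to the number of entries picked.

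The key step is bounding $|L_0|$ at the start of iteration $j$ (equivalently $|L_k|$ at the end of iteration $j-1$) by $2k+1$. The crucial observation is that on each diagonal $d$ the slides producing $F^{|d|}(d),F^{|d|+1}(d),\dots,F^k(d)$ are pairwise disjoint and occur in order of $h$: the recurrence (\ref{eqn:LMSrec}) guarantees $\max S_{d,h}\ge F^{h-1}(d)+1$ whenever $|d|\le h-1$, so the slide producing $F^h(d)$ starts strictly after the slide producing $F^{h-1}(d)$ ends, while the boundary case $|d|=h$ has no prior slide on diagonal $d$ to overlap with. Combined with the counter mechanism, which prevents $(d,h+1)$ from being processed before $(d,h)$ has completed, this guarantees that at any block boundary at most one in-flight slide per diagonal is pending, giving $|L_k|\le 2k+1$. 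Hence part (ii) contributes $O(k)$ time, which is absorbed into the $O(k\cdot\min\{\log k,\log|\Sigma|\})$ cost of part (i).

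The number of entries processed in part (iii) is at most $k_j$ by definition, since a newly inserted entry is placed directly into the list $L_i$ from which it will be picked (once) within the current block or moved into $L_k$ to be handled as a carry-over in the next iteration. By (a streaming analog of) Lemma~\ref{lemma:globlaLMSCorrectness}, each of the $\sum_{h=0}^{k}(2h+1)=(k+1)^2=O(k^2)$ pairs $(d,h)$ with $|d|\le h\le k$ is completed exactly once over the entire run, and each completion fires at most three $\Update$ calls, so $\sum_j k_j=O(k^2)$. I expect the main obstacle to be making the disjointness-of-slides argument airtight at the boundary $|d|=h$, where $S_{d,h}$ draws only from neighboring diagonals and the recurrence does not refer to $F^{h-1}(d)$, and then combining it correctly with the counter mechanism; a careful case analysis on the sign of $d$ together with induction on $h$ should pin down the $|L_k|\le 2k+1$ bound and hence the per-iteration accounting.
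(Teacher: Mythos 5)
Your proposal is correct and follows essentially the same route as the paper's proof: charge $O(k\cdot\min\{\log k,\log|\Sigma|\})$ to the per-block suffix-tree/LCA construction, bound the carried-over list by $O(k)$, process each picked entry in $O(1)$ time, and charge the remaining work to the $O(k^2)$ total number of (multiplicity-counted) insertions of pairs $(d,h)$, which yields $\sum_j k_j=O(k^2)$. Your only added content is an explicit justification of the $|L_0|\le 2k+1$ carry-over bound (the paper simply asserts it); note that $L_k$ can also receive entries $(d,h)$ queued by $\Update$ with row exactly $k$ in addition to in-flight partial slides, so the precise constant is slightly larger, but the $O(k)$ bound and hence the lemma are unaffected.
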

	Now clearly the overall time complexity of Algorithm~\ref{alg:streamingLMS} is $O(n+k^2)$. It remains to prove the above claim.
	\begin{proof}[Proof of Lemma~\ref{claim:streamTime}]
		At each iteration, first we need to construct a generalized suffix tree for blocks of size $O(k)$ and a data structure for finding lowest common ancestor for that generalized suffix tree as in~\cite{LMS98} and thus we require $O(k\cdot \min\{\log k, \log |\Sigma|\})$ time~\cite{Gus97}. To set list $L_{0}$ to $L_k$, we need only $O(1)$ of pointer updates. Finally we need to process items stored in the lists $L_0,\dots,L_{k-1}$. As in the case of Algorithm~\ref{alg:topDownLMS}, processing each point takes only $O(1)$ time, and thus to conclude the proof it suffices to bound the number of items on these lists by $O(k+k_j)$. 
Let $k_j$ be the number of items that are added to lists $L_{jk+1},\dots,L_{(j+1)k-1}$ during execution of Algorithm~\ref{alg:topDownLMS} on $x$ and $y$. Clearly, those are precisely the items
that will be added to lists $L_1,\dots,L_{k-1}$ by Algorithm~\ref{alg:streamingLMS} during the $j$-th iteration. Since $L_0$ may contain at most $2k+1$ items and the same item can be added to any list
at most three times, in total we process $O(k+k_j)$ items from lists $L_0,\dots,L_{k-1}$ during the $j$-th iteration. 
The total number of points added (with multiplicity) to lists $L_0,\dots,L_n$ by Algorithm~\ref{alg:topDownLMS} is $O(k^2)$, hence $\sum_j k_j = O(k^2)$.
This completes the proof.
	\end{proof}

	Let us now discuss the space complexity of Algorithm~\ref{alg:streamingLMS}. Algorithm~\ref{alg:streamingLMS} requires space for three purposes. First, to construct generalized suffix tree for blocks of size $O(k)$ and a data structure for finding lowest common ancestor for that generalized suffix tree as in~\cite{LMS98}. This requires $O(k)$ space~\cite{Gus97}. Second, to maintain the lists $L_0,\dots,L_{k}$ and the array $D$ we use space of size $O(k^2)$. Third, we need to maintain all the values of $L^h(d)$ and $L^h(-d)$ for $h,d \in \set {0, \dots, k}$ and that requires total $O(k^2)$ space. Hence total space requirement is $O(k^2)$.
	
	\paragraph{Reducing the space requirements for computing edit distance:}
If we are interested in computing only the edit distance of $x$ and $y$ instead of their optimal alignment, we can reduce the space used by the algorithm to $O(k)$.
At any instant of time, lists $L_0,\dots,L_n$ of Algorithm~\ref{alg:topDownLMS} contain $O(k)$ items so, the same is true for lists $L_0,\dots,L_k$ of Algorithm~\ref{alg:streamingLMS}.
Indeed, at any time, for a given diagonal $d$, if $h$ is maximal such that $L^h(d)$ was already set then lists $L_0,\dots,L_n$ can contain only entries $(d,h')$ for $h'\in\{h+1,h+2\}$.
So in total the lists contain $O(k)$ items at any given time. This also means, that at any given time, arrays $C$ and $D$ have only $O(1)$ relevant entries for each $d$ so they can
be replaced by a $O(k)$-space data structure that maintains only the relevant entries and provides look-up and udate in $O(1)$ time. If we are interested only in the edit distance of 
$x$ and $y$, we do not need to store $L^h(d)$
for all possible $d$ and $h\le k$ but we can only look for the relevant entry for $d=0$. As the suffix tree data structure for efficient slides requires only $O(k)$ space in Algorithm~\ref{alg:streamingLMS}
the total space used by the algorithm can be reduced to $O(k)$.

\section{Computing Edit Distance without using Suffix Trees}
\label{sec:noSuffixTrees}

While from theoretic perspective the task of building a suffix tree requires only linear time, practically
they are quite expensive to build. Hence, for practical purposes people are using a straightforward 
implementation of ~\cite{UKK85} algorithm to compute edit distance, i.e. computing the slide function 
by comparing character by character, cf.~\cite{PP08}. In this section we propose a new approach for implementing~\cite{UKK85} algorithm, which does not involve computing suffix trees
but still avoids  long parallel slides. 

We obtain an algorithm that makes one-pass over its input $x$ and $y$, uses space $O(k)$ to compute the edit distance of $x$ and $y$
($O(k^2)$ space to compute an optimal alignment of $x$ and $y$) and runs in time $n+O(k^3)$. By writing $n+O(k^3)$ we want
to emphasize that the number of operations per an input symbol is a small constant. Indeed, to process most of the symbols
of $x$ and $y$ we perform a single comparison for each character within a simple loop. Hence, an ideal implementation
of our algorithm would just zip through most of the two strings $x$ and $y$.

Our new algorithm extends Algorithms~\ref{alg:topDownLMS} and~\ref{alg:streamingLMS} but implements all the slides in the most na\"{\i}ve way
using character by character comparison. This in general would lead to running time $O(nk)$. To bring down the cost to $n+O(k^3)$
we use ideas from~\cite{CGK16}. In particular, if we slide along two diagonals $d\le d'$ for more than $2(d'-d)$ common rows
then the corresponding two parts of $x$ and $y$ are periodic with period $d'-d$. Hence, we do not need to slide on both of the diagonals,
we may slide only on one of them and keep verifying the periodicity. Once the periodicity stops we know that at least one the two diagonals
must end its slide and pay an edit operation. For two diagonals sliding in parallel, this does not give much of savings but this naturally
generalizes to sliding along multiple diagonals $d_1 < d_2 < \cdots < d_\ell$ in parallel. If they slide in parallel for more than $2(d_\ell-d_1)$ rows then
the corresponding parts of $x$ and $y$ are periodic with period $gcd(d_\ell-d_1,d_\ell-d_2,\dots,d_\ell-d_{\ell-1})$. Again, it suffices to slide along
only one of them (most conveniently along the rightmost one) and keep verifying its periodicity (see Figure\ref{fig:matureDiag}). Once the periodicity stops,
either the rightmost diagonal has to pay an edit operation, all of them have to pay, or all but the rightmost one.
As the average length of a slide is $n/k \gg 4k$, for $k \ll \sqrt{n}$, this gives a noticeable advantage.
\begin{center}
	\begin{figure}[ht]
\centering
\begin{tikzpicture}[scale=0.3,shorten >=1mm,>=latex]
 \tikzstyle gridlines=[color=black!20,very thin]
 %old scoring scheme
% \draw[color=black!20,very thin] (0,0) grid (20,20);

% \foreach \x in {0,...,3}
%  \foreach \y in {0,...,3}
%   {
%     \draw[fill,color=black!50] (\x,\y) circle (0.65mm);
%   }
\draw[color=black!20,thin] (-1,9) rectangle (18,21);
%\draw[rectangle](21,0)--(15,11);
\draw[-,thick](0,20)--(1,19);
\draw[fill,color=gray!50] (1,19) circle (2mm);
\draw[-,thick](1,19)--(3,17);
\draw[fill,color=gray!50] (3,17) circle (2mm);

\draw[fill,color=gray!50] (1,18) circle (2mm);
\draw[-,thick](1,18)--(5,14);
\draw[-,dashed,thick](1,18)--(9,10);
\draw[-,thick](1,20)--(5,16);
\draw[-,dashed,thick](5,16)--(11,10);
\draw[-,thick](5,20)--(10,15);
\draw[-,dashed,thick](10,15)--(15,10);
\draw[-,thick](6,20)--(7,19);
\draw[fill,color=gray!50] (8,19) circle (2mm);
\draw[-,thick](8,19)--(17,10);
\draw[fill,color=gray!50] (7,19) circle (2mm);
\draw[-,thick](7,19)--(9,17);
\draw[fill,color=gray!50] (9,17) circle (2mm);

\draw[fill,color=gray!50] (9,10) circle (2mm);
\draw[fill,color=gray!50] (11,10) circle (2mm);
\draw[fill,color=gray!50] (15,10) circle (2mm);
% \foreach \x in {0,2,5}
%  \foreach \y in {0,...,19}
%   {
%     \draw[-,thick] (\x,\y)--(\x-1,\y-1);
%%     \draw[-,dashed,thick] (\x,\y)--(\x,\y-1);
%%     \draw[-,dashed,thick] (\x,\y)--(\x+1,\y-1);
%   }
%   \node at (-0.5,20.5) {$-4$};
%   
%   \node at (0.5,20.5) {$-3$};
%   \foreach \y in {1,...,3}
%   {
%     \draw[->,dashed,thick] (0,\y)--(0,\y-1);
%     \draw[->,dashed,thick] (3,\y)--(3,\y-1);
%     \node at (0.5,\y-0.1) {$1$};
%     \node at (1.5,\y-0.1) {$1$};
%     \node at (2.5,\y-0.1) {$1$};
%   }
%   \foreach \x in {0,...,2}
%   {
%     \draw[->,dashed,thick] (\x,0)--(\x+1,0);
%     \draw[->,dashed,thick] (\x,3)--(\x+1,3);
%     \node at (\x+0.1,2.5) {$1$};
%     \node at (\x+0.1,1.5) {$1$};
%     \node at (\x+0.1,0.5) {$1$};
%   }
%   %cost function
%   \node at (0.6,2.6) {$0$};
%   \node at (1.6,2.6) {$0$};
%   \node at (2.6,2.6) {$1$};
%   \node at (0.6,1.6) {$1$};
%   \node at (1.6,1.6) {$1$};
%   \node at (2.6,1.6) {$0$};
%   \node at (0.6,0.6) {$0$};
%   \node at (1.6,0.6) {$0$};
%   \node at (2.6,0.6) {$1$};
%   \node at (0.5,-0.1) {$1$};
%   \node at (1.5,-0.1) {$1$};
%   \node at (2.5,-0.1) {$1$};
%   \node at (3.1,2.5) {$1$};
%   \node at (3.1,1.5) {$1$};
%   \node at (3.1,0.5) {$1$};
%   \node at (-0.2,3.2) {$(0,0)$};
%   \node at (3.2,-0.2) {$(3,3)$};

\end{tikzpicture}
\caption{Illustration of the slides performed by Algorithm~\ref{alg:noSuffixTree}: Diagonals on which the algorithm compares character by character are marked by continuous lines; Mature diagonals which are not the right most are marked by dashed lines;  Edit operations are marked by a circle.}
   \label{fig:matureDiag}
\end{figure}
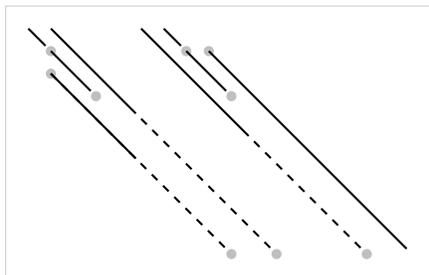
\end{center}

We extend our previous algorithms using this idea. 
We will say that a diagonal $d$ is {\em mature} at row $i$
if $x_{i-4k+1,\dots, i }= y_{i-4k+1+d,\dots, i+d}$. If we have two or more diagonals that are mature at row $i$ then we know that the 
corresponding parts of $x$ and $y$ are periodic. Our algorithm mimics Algorithm~\ref{alg:streamingLMS} in the way that it breaks
each slide operation into atomic pieces of just one character slides. The algorithm maintains lists $L_0,L_1,\dots,L_n$
that keep track of the sliding diagonals as in Algorithm~\ref{alg:streamingLMS}. (At any given time, only two lists $L_i$ and $L_{i+1}$ are
non-empty. We process diagonals on list $L_i$ in turn which puts some diagonals on the next list $L_{i+1}$.)
For each diagonal on list $L_i$
we also keep track of when it got on a list for the first time, so we extend our entries $(d,h)$ to $(d,h,\startSlide)$.
If a diagonal slides for more than $4k$ steps, it is put on a special list of mature diagonals $\matureList$ where it will
stay up until a mismatch on this diagonal occurs. 
(The mature diagonals will only require little attention most of the time
and they are handled in $\handleMatures(i)$.)
% To advance a diagonal on the list $L_i$ by one symbol, we check whether its upcoming symbols
%from $x$ and $y$ match. If they match we advance the diagonal to list $L_{i+1}$ if they do not match we finish the slide and put new induced
% entries of $F^{h+1}(d\pm 1)$ on lists $L_i$ and $L_{i+1}$. This way we process the whole list $L_i$. 
Procedure $\handleMatures()$ maintains the length of the current period of $x$ and $y$ 
in $\maturePeriod$, and $\rightMature$ stores the index of the rightmost diagonal on $\matureList$.
% Upon invocation it checks whether the next symbols of $x$ and $y$ still preserve that period. If periods of both $x$ and $y$ are preserved nothing has to be done.
% If either $x$ or $y$ breaks the period, the procedure decides which mature diagonals are affected by this and moves them back to the current list $L_i$ for their slides to be terminated.
% Procedure $\handleMatures$ checks the periodicity only on the rightmost diagonal $\rightMature$.
In addition to that, our algorithm maintains arrays $C$ and $D$ that have the same meaning as in Algorithm~\ref{alg:streamingLMS}.

The algorithm processes the input strings $x$ and $y$ row by row of the edit distance matrix of $x$ and $y$. At row $i$ it performs two main tasks: 
First, it deals with the mature diagonals encountered so far. In this part,
the algorithm checks whether there is a mismatch on the right most mature 
diagonal and whether the input strings respect the periodic pattern implied 
by mature diagonals. If both the conditions are met,
then no edit operation is taken at row $i$ for any of the mature 
diagonals. Otherwise, the algorithm identifies which are the mature diagonals 
that pay an edit operation and migrates them to the corresponding list $L_i$.

Second, the algorithm processes the list $L_i$. 
The algorithm first checks whether the current entry $(d,h,\startSlide)$ is {\em definite}: 
if it is not, then it is discarded. (The entry is {\em definite} if $\startSlide = \max S_{d,h}$ which happens iff $C(d,h)=c_{d,h}$.) Otherwise, 
the algorithm checks for a mismatch in row $i+1$ on diagonal $d$. In the case that there is 
a mismatch it mimics the behavior of Algorithm~\ref{alg:topDownLMS}. Otherwise, 
it checks whether the current slide on diagonal $d$ is long enough 
(by checking whether $i-\startSlide\ge 4k$) and if it is then the algorithm
migrates this entry into the mature diagonals list. Otherwise it is migrated to the 
list $L_{i+1}$.

Before we give a pseudo-code of the algorithm let us explain how we determine
the periodicity implied by the mature diagonals. Our key lemma, see Corollary~\ref{claim:matureImpliesPeriodic} below, 
asserts that for every row $i$,
if diagonals $d<d'$ are mature with respect to row $i$, then $x_{i-4k+1,\dots, i}$ and $y_{i-4k+1+d,\dots, i+d}$ are 
periodic with period $d'-d$. Since this is true for every pair of mature diagonals, using properties of periodicity, 
this implies that the corresponding substrings of $x$ and $y$ are periodic with period size 
$p=gcd\{d_\ell-d:\, d\in M\}$, where $M$ is the set of mature
diagonals and $d_\ell$ is the right most one. A pseudo-code of the algorithm follows.
\begin{algorithm}[H]
	\Input{$x,y\in \zo^n$, and a parameter $k\in [n]$ such that $\Delta_e(x,y)\le k$.}
	\Output{$\Delta_e(x,y)$ }
	
	\tcp{Initialization:}
	Initialize all lists $L_i$ and $\matureList$ to be empty; 
	
	For all integers $d,h$ such that $\abs{d}  \le h \le k$, set $D(d,h)=\Null$ and $C(d,h)=0$;
	
	Set $\rightMature=\infty$ and $\maturePeriod = 0$ ; 

	Invoke $\Update(0,0,0)$; 

	\tcp{Main Loop:}		
		
	\For{$i=0,\dots, n-1 $}
	{
		
		$\handleMatures(i)$;

		\While{$L_i$ is not empty}{ 
			Pick the first entry $(d,h,\startSlide)$ stored in $L_i$ and remove it from the list;
						
			\If{$C(d,h)= c_{d,h}$} 
			{
				\If{$x_{i+1} \neq y_{i+d+1}$ or $i+d+1 > n$}{ \tcp{Diagonal $d$ just finished sliding}
					$L^h(d)=i$;
		
					\If{$h<k$}{
						$\Update(d,i+1,h+1)$;
			
						\lIf{$d<k$}
						{$\Update(d+1,i,h+1)$}
			
						\lIf{$d>-k$}
						{$\Update(d-1,i+1,h+1)$}
					}
					
				}
				\Else
				{
					\lIf{$i-\startSlide > 4k$}
					{$\moveToMatures(d,h,\startSlide)$}
					\lElse{ Insert $(d,h,\startSlide)$ into $L_{i+1}$}	
				}
				
			}

		}
		
	}
		
	Output the smallest $h\le k$ such that $L^h(0)=n$.
	\caption{An algorithm for computing edit distance that does not use suffix trees}\label{alg:noSuffixTree}
\end{algorithm}

\begin{procedure}[ht]
	\hrule height .7pt\vspace{1mm}
	\TitleOfAlgo{$\newRightMature(d)$}
	\hrule\vspace{1mm}
	Set $\rightMature=d$ and $\maturePeriod = 0$;
	
	\tcp{Recompute the period}
	\ForEach{$(d',h',\startSlide') \in \matureList$}{
		
		\lIf{$d\neq d'$}{$\maturePeriod =gcd (\maturePeriod, d-d')$}
	}
		\hrule
\end{procedure}

\begin{procedure}[H]
	\hrule height .7pt\vspace{1mm}
	\TitleOfAlgo{$\moveToMatures(d,h,\startSlide)$}
	\hrule\vspace{1mm}
	Add $(d,h,\startSlide)$ into $\matureList$, and let $D(d,h)$ point to that entry;
	
	\lIf{$\matureList$ contains only one entry}{Set $\rightMature=d$ and $\maturePeriod=0$}
	\Else
	{
		\lIf{$d>\rightMature$}
		{ 
			 $\newRightMature(d)$
		}
		\lElse {
			$\maturePeriod = gcd(\maturePeriod, \rightMature-d)$
		}

	}
%	\hrule
\end{procedure}

\begin{procedure}[H]
	\hrule height .7pt\vspace{1mm}
	\TitleOfAlgo{$\handleMatures(i)$}
	\hrule\vspace{1mm}
	\If{$\matureList$ is non-empty}{
		\If{$\matureList$ contains only one entry}{ 

			\If{$x_{i+1} \neq y_{i+\rightMature+1}$}
 			{  \tcp{The unique mature diagonal has a mismatch}
				Move the entry of diagonal $\rightMature$ from $\matureList$ to $L_{i}$; 				
			}

			\Return;
		}

		\If{$x_{i+1}= x_{i+1-\maturePeriod}$}
		{
			\If{$x_{i+1} \neq y_{i+\rightMature +1}$}
			{   \tcp{The rightmost mature diagonal has a mismatch but none else}
				
				Move the entry of diagonal $\rightMature$ from $\matureList$ to $L_{i}$; 
				
				Find $d$, the current largest diagonal in $\matureList$, and invoke
				$\newRightMature(d)$;
				
			}

			\Return;
			
		}
			
		\tcp{Mismatch on all but possibly the rightmost mature diagonals}
%
%	\enote{Shouldn't there be else here?}
% MK: 'else' is not needed since there is 'return' statement which exits the procedure.
%
	
		\ForEach{$(d,h,\startSlide)\in \matureList$}
		{
			\If{$d\neq \rightMature$}{ Move the entry $(d,h,\startSlide)$ from $\matureList$ to $L_{i}$; }
		}
		
		%				$\Foreach{d \in \}{for-block}$
		\If{$x_{i+1}\neq y_{i+\rightMature+1}$}
		{
			\tcp{Mismatch also on the rightmost diagonal}
			Move the entry of diagonal $\rightMature$ from $\matureList$ to $L_{i}$; 
		
		}
		Set $\maturePeriod=0$;
	}
	\hrule
\end{procedure}

The procedure $\Update(d,i,h)$ is as in Algorithm~\ref{alg:topDownLMS} except that instead of inserting $(d,h)$ to the list $L_i$, 
it inserts $(d,h,i)$ into that list.
For convenient we define $gcd(0,a)=a$.

\paragraph{Correctness of Algorithm~\ref{alg:noSuffixTree}.}
To prove the correctness of our algorithm we will need the following standard facts (cf.~\cite{CR94}).

\begin{proposition}[Cf.~\cite{CR94}]\label{claim:gcd}
\begin{enumerate}
\item	Let $x\in \{0,1\}^*$ be a string and assume $x$ is periodic with period size $p$ and $q$.
	Then $x$ is periodic with period size $gcd(p,q)$.
\item  Let $w,u,v\in \{0,1\}^*$ be such that $vw=wu$ and $|v|=|u|\le |w|$ . Then $vw$ is periodic with a period of size $|v|$.
\item  Let $w,u,v\in \{0,1\}^*$ be such that $vw$ and $wu$ are periodic with a period of the same size $\le |v|,|u| \le |w|$. Then $vwu$ 
	is also periodic with a period of the same size.
\end{enumerate}
\end{proposition}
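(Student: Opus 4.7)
My plan is to prove the three claims separately, appealing to the classical Fine--Wilf theorem for Part 1 and carrying out direct index-chasing arguments for Parts 2 and 3.

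For Part 1, I will invoke the Fine--Wilf theorem: any string of length $n$ having periods $p$ and $q$ with $p+q \le n + \gcd(p,q)$ also has period $\gcd(p,q)$. The paper's definition of periodicity forces $p, q \le |x|/2$, so $p + q \le |x| \le |x| + \gcd(p,q)$ automatically, and $\gcd(p,q) \le \min(p,q) \le |x|/2$ makes the resulting period admissible under the paper's definition. The only subtle point is to reconcile the paper's structural definition (requiring $x = w^\ell z$ with $\ell \ge 2$) with the usual pointwise notion of period ($x_i = x_{i+p}$ for all valid $i$); the two are equivalent as soon as $|x| \ge 2p$, which is guaranteed here.

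For Part 2, I set $s = vw = wu$ and $p = |v| = |u|$. Since $v$ is a prefix of $wu$ of length $p \le |w|$, it is a prefix of $w$; symmetrically, $u$ is a suffix of $w$. Reading the equality $vw = wu$ coordinatewise then yields $s_i = s_{i+p}$ for every $1 \le i \le |s| - p$: when $i + p \le |w|$ this follows from $s_i = w_i$ and $s_{i+p} = w_{i+p} = w_i$ (using that $w$ inherits the shift via the equality); when $i + p > |w|$ it follows from reading $s_{i+p}$ through the $vw$ factorization. Since $|s| = p + |w| \ge 2p$, this pointwise periodicity upgrades to the paper's structural notion with period size $p$.

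For Part 3, let $p$ be the common period size. To verify $(vwu)_i = (vwu)_{i+p}$ for every $1 \le i \le |vwu| - p$, I split into two cases: if $i + p \le |vw|$, then periodicity of $vw$ gives the equality; if $i > |v|$, then periodicity of $wu$ gives it. The only seemingly uncovered case, $i \le |v|$ together with $i + p > |vw|$, would force $p > |w|$, contradicting the hypothesis $p \le |w|$; so the two cases are exhaustive. I expect no significant obstacle across the three parts: Part 1 is a one-line appeal to Fine--Wilf once the definitional conventions are aligned, and Parts 2 and 3 reduce to careful bookkeeping with indices, with the hypothesis $p \le |w|$ doing the essential work of ruling out boundary cases.
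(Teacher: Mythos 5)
The paper never proves this proposition --- it is stated as a standard fact with a pointer to \cite{CR94} --- so there is no in-paper argument to compare against; your proposal is a correct self-contained verification along the standard lines. Part 1 is fine: the paper's definition of periodic forces $p,q\le |x|/2$, so the Fine--Wilf condition $|x|\ge p+q-\gcd(p,q)$ holds automatically and $\gcd(p,q)\le |x|/2$ keeps the conclusion within the definition, and your case analyses for Parts 2 and 3 (with $p\le|w|$ ruling out the boundary case) are sound, as is the upgrade from pointwise to structural periodicity since the strings have length at least $2p$. One small polish on Part 2: the equality $w_{i+p}=w_i$ you invoke in the first case is itself a consequence that deserves a line (comparing the two factorizations of $s=vw=wu$ at positions $p+1,\dots,|w|$ gives $w_j=w_{j-p}$); in fact no case split is needed at all, since for every $1\le i\le |w|$ the factorization $s=wu$ gives $s_i=w_i$ and the factorization $s=vw$ gives $s_{i+p}=w_i$, whence $s_i=s_{i+p}$ directly.
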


We will make use of the following simple corollary:

\begin{corollary}\label{claim:matureImpliesPeriodic}
	Let $x,y\in \Sigma^n$.
	Let $d'>d\in \pmk$ be diagonals, let $2(d'-d) \le m \le i \le n$. 
	If \[ x_{i-m+1,\dots, i }= y_{i-m+1+d,\dots, i+d } \text{ and } 
	x_{i-m+1,\dots, i }= y_{i-m+1+d',\dots, i+d' } \]
	Then both $x_{i-m+1,\dots, i }$ and $y_{i-m+1+d,\dots, i+d}$ are periodic with a period of size $d-d'$.
\end{corollary}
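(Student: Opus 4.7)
The plan is to view the two hypotheses as saying that a single string $w := x_{i-m+1,\dots,i}$ of length $m$ occurs inside $y$ at two starting positions that differ by exactly $p := d'-d$. Since $m \ge 2p$ by assumption, these two occurrences overlap in a contiguous block of length $m-p \ge p$, which is precisely the shape needed to apply Proposition~\ref{claim:gcd}(2). The argument is then purely a matter of reading off a $vw' = w'u$ decomposition from the two occurrences and invoking that proposition.

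Concretely, I would set
\[
v := y_{i-m+1+d,\dots,i-m+d'}, \qquad u := y_{i+d+1,\dots,i+d'}, \qquad w' := y_{i-m+1+d',\dots,i+d},
\]
so that $|v| = |u| = p$ and $|w'| = m-p$. The first hypothesis $w = y_{i-m+1+d,\dots,i+d}$ then reads $w = v\,w'$, while the second hypothesis $w = y_{i-m+1+d',\dots,i+d'}$ reads $w = w'\,u$. In particular $v w' = w' u$ with $|v| = |u| = p$ and, crucially, $p \le m-p = |w'|$ because $m \ge 2p$. Proposition~\ref{claim:gcd}(2) therefore yields that $w = v w'$ is periodic with period size $p = d'-d$. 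Since $y_{i-m+1+d,\dots,i+d}$ is literally the same string as $w$ (by the first hypothesis), it inherits the same periodicity, which gives the conclusion.

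I do not anticipate any real obstacle here: the entire content of the corollary is that two shifted occurrences of a common string force a period equal to the shift, which is classical. The only thing that requires care is the index bookkeeping, in particular verifying that the three pieces $v$, $w'$, $u$ have the claimed lengths and that the assumption $m \ge 2(d'-d)$ translates exactly to $|v| \le |w'|$, which is the hypothesis of Proposition~\ref{claim:gcd}(2). Once that is checked, the corollary is immediate.
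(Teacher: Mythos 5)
Your proposal is correct and is essentially identical to the paper's own proof: the paper sets $v=y_{i-m+1+d,\dots,i-m+d'}$, $w=y_{i-m+1+d',\dots,i+d}$, $u=y_{i+d+1,\dots,i+d'}$ and invokes the second part of Proposition~\ref{claim:gcd}, which is exactly your decomposition (your $w'$ is the paper's $w$). The only difference is that you spell out the length bookkeeping and the check $|v|\le|w'|$, which the paper leaves implicit.
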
	

\begin{proof}
Set $v=y_{i-m+1+d,\dots, i-m+1+d'-1}$, $w=y_{i-m+1+d',\dots,i+d}$ and $u=y_{i+d+1,\dots,i+d'}$. Apply the second part of the previous proposition.
\end{proof}

We will need the following main technical lemma about mature diagonals.

\begin{lemma}\label{cor:matures}
	Let $x,y\in \Sigma^n$.
	Let $i$ be an integer so that $4k \le i <n$. Let $M = \{d_1<d_2<\cdots <d_\ell\} \subseteq \pmk$ be such that for each $d\in M$, $x_{i-4k+1,\dots ,i} = y_{i-4k+1+d,\dots ,i+d}$. Let $p=gcd\{d_\ell-d:\, d\in M\}$. Then: 

	\begin{enumerate}
		\item $x_{i-4k+1,\dots ,i}$ and $y_{i-4k+1+d_1,\dots ,i+d_\ell}$ are periodic 
		with period size $p$.

%		\item If $x_{i+1}=x_{i+1-p}=y_{i+1+d_\ell}$ then $x_{i-4k+2,\dots ,i+1} = y_{i-4k+2+d_\ell,\dots ,i+1+d_\ell}$.

		\item If $x_{i+1}=x_{i+1-p}$ then for all $j\in\{1,\dots,\ell-1\}$, $x_{i+1} = y_{i+1+d_j}$.

		\item If $x_{i+1}\neq x_{i+1-p}$ then for all $j\in\{1,\dots,\ell-1\}$, $x_{i+1} \neq y_{i+1+d_j}$.
	\end{enumerate}
\end{lemma}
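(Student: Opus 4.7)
The plan is to reduce everything to periodicity of the single substring $Y = y_{i-4k+1+d_1,\dots,i+d_\ell}$ of length $4k+d_\ell-d_1$. First I establish periodicity of $x_{i-4k+1,\dots,i}$: for every $j<\ell$, applying Corollary~\ref{claim:matureImpliesPeriodic} to the pair of diagonals $\{d_j,d_\ell\}$ with $m=4k$ (the hypothesis $2(d_\ell-d_j)\le 4k$ is automatic since all $d_j\in\pmk$) yields that $x_{i-4k+1,\dots,i}$ is periodic with period size $d_\ell-d_j$. Iterating Proposition~\ref{claim:gcd}(1) over these $\ell-1$ periods collapses them into a single period of size $\gcd\{d_\ell-d_j:j<\ell\}=p$, using the convention $\gcd(0,a)=a$ to match the definition of $p$ in the statement.

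Next I lift this periodicity to all of $Y$. The prefix of $Y$ of length $4k$ equals $x_{i-4k+1,\dots,i}$ by the mature condition for $d_1$, and its suffix of length $4k$ equals the same string by the mature condition for $d_\ell$; both halves are therefore periodic with period size $p$. Writing $Y=vwu$ with $|v|=|u|=d_\ell-d_1$ and $|w|=4k-(d_\ell-d_1)\ge 2k$, the size conditions of Proposition~\ref{claim:gcd}(3) are met ($p\le d_\ell-d_1$ since $p$ divides $d_\ell-d_1$, and $d_\ell-d_1\le |w|$), so gluing yields $Y$ periodic with period size $p$. This settles part 1.

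For parts 2 and 3, fix $j\in\{1,\dots,\ell-1\}$ and let $t=4k+d_j-d_1+1$ be the index inside $Y$ corresponding to $y_{i+1+d_j}$. Since $j<\ell$ we have $t\le|Y|$, and $t-p\ge 1$ because $p\le 2k<t$, so periodicity of $Y$ gives $y_{i+1+d_j}=Y[t]=Y[t-p]=y_{i+1-p+d_j}$. The mature condition for $d_j$ then replaces the right-hand side by $x_{i+1-p}$ (the index $i+1-p$ lies in $[i-4k+1,i]$ since $p\in[1,2k]$). Part 2 now follows from the hypothesis $x_{i+1}=x_{i+1-p}$, and part 3 from $x_{i+1}\neq x_{i+1-p}$. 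The only step that requires real care is the gluing of the two $4k$-long periodic halves of $Y$ in part 1; once that is in place, parts 2 and 3 are essentially index arithmetic against the mature conditions. (The degenerate case $\ell=1$ makes $p=0$ and renders parts 2 and 3 vacuous, so nothing needs to be said there.)
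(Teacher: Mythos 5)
Your proof is correct and follows essentially the same route as the paper: Corollary~\ref{claim:matureImpliesPeriodic} applied to each pair $(d_j,d_\ell)$ plus Proposition~\ref{claim:gcd}(1) gives the period $p$ on $x_{i-4k+1,\dots,i}$, Proposition~\ref{claim:gcd}(3) lifts it to $y_{i-4k+1+d_1,\dots,i+d_\ell}$, and parts 2--3 are the same index chase $y_{i+1+d_j}=y_{i+1+d_j-p}=x_{i+1-p}$. The only (harmless) difference is that you glue the two extreme $4k$-blocks with a single application of Proposition~\ref{claim:gcd}(3), whereas the paper glues the blocks for all $d\in M$ by repeated application.
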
	

\begin{proof}
	For the first part. By applying the previous corollary for $d'=d_\ell$ and $d=d_j$, $j=1,\dots,\ell-1$, we get that $x_{i-4k+1,\dots ,i}$
 	is periodic with each period size $d_\ell-d_j$. By the first part of Proposition~\ref{claim:gcd}, $x_{i-4k+1,\dots ,i}$ is periodic
	with a period of size $p$, so each $y_{i-4k+1+d,\dots ,i+d}$, $d\in M$, is periodic with a period of size $p$. By repeated application 
	of the third part of Proposition~\ref{claim:gcd}, $y_{i-4k+1+d_1,\dots ,i+d_\ell}$ is periodic with a period of size $p$.

	For the second and third parts. By the first part, $y_{i-4k+1+d_1,\dots ,i+d_\ell}$ is periodic with a period of size $p$. Since $d_j < d_\ell$, we get
        $y_{i+1+d_j}=y_{i+1+d_j-p}$. By the assumption, $x_{i+1-p}=y_{i+1-p+d_j}$, so $y_{i+1+d_j}=x_{i+1-p}$. Hence, $x_{i+1} = x_{i+1-p}$
	iff $x_{i+1} = y_{i+1+d_j}$.
\end{proof}

When running Algorithm~\ref{alg:noSuffixTree} on strings $x$ and $y$ we can assert several properties.

 \begin{claim}\label{claim:mainLoop}
	Let $i \in [n]$, at beginning of the $i$-th iteration, if $(d,h,\startSlide)$ is on list $L_i$, $m=i-\startSlide \le 4k$
and $C(d,h)=c_{d,h}$ then $x_{i-m+1,\dots,i}=y_{i-m+1+d,\dots,i+d}$.
\end{claim}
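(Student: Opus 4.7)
The plan is to prove the claim by induction on $i$ (equivalently, on $m$), with the core being a case analysis on how the entry $(d,h,\startSlide)$ could have arrived on $L_i$. The base case $i=0$ is immediate: after initialization $L_0$ contains only $(0,0,0)$, for which $m=0$ and the claimed equality is the identity of two empty substrings. The same trivially handles the sub-case $m=0$ in the inductive step.

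For $m\ge 1$, I would first narrow the possibilities: the only route of $(d,h,\startSlide)$ onto $L_i$ is the line ``Insert $(d,h,\startSlide)$ into $L_{i+1}$'' executed during iteration $i-1$. The two alternative mechanisms are ruled out. First, $\Update(d,i,h)$ always records its own row argument as $\startSlide$, forcing $m=0$. Second, $\handleMatures$ only transfers entries out of $\matureList$, but $\moveToMatures$ admits an entry only once $i-\startSlide>4k$ and the $\startSlide$ field is frozen while the entry sits in $\matureList$, so any entry deposited by $\handleMatures$ on some list has $m>4k$, contradicting our hypothesis.

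The Insert branch supplies, at the moment of insertion during iteration $i-1$, two useful facts: $C(d,h)=c_{d,h}$ (checked in the enclosing if), and $x_i=y_{i+d}$ (the negation of the mismatch test $x_{(i-1)+1}\ne y_{(i-1)+d+1}$). Just before the insertion, $(d,h,\startSlide)$ was sitting on $L_{i-1}$ with $m'=m-1\le 4k-1$. If $m=1$ then $\startSlide=i-1$ and the conclusion is precisely $x_i=y_{i+d}$, already in hand. Otherwise $\startSlide<i-1$, and since no $\Update$ invoked during iteration $i-1$ can deposit an entry with $\startSlide<i-1$ (its argument is always $i-1$ or $i$), the entry was already on $L_{i-1}$ at the start of iteration $i-1$. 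The induction hypothesis then yields $x_{i-m+1,\dots,i-1}=y_{i-m+1+d,\dots,i-1+d}$, and concatenation with $x_i=y_{i+d}$ closes the induction.

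The main obstacle I anticipate is justifying that the $C(d,h)=c_{d,h}$ precondition really holds at the start of iteration $i-1$, since $C(d,h)$ can in principle cross the threshold $c_{d,h}$ partway through an iteration when other entries trigger $\Update$ calls touching $(d,h)$. The resolution is the observation that any entry currently on a list with $m\ge 1$ (and $m\le 4k$) must itself have passed through an Insert branch whose execution demanded $C(d,h)=c_{d,h}$; since $C$ is monotone non-decreasing and is capped at $c_{d,h}$ by the counting argument underlying Lemma~\ref{lemma:globlaLMSCorrectness}, the equality $C(d,h)=c_{d,h}$ persists from that point on. Thus the precondition propagates automatically through the induction, and no additional hypothesis is needed.
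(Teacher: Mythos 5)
Your proof is correct and is essentially the argument the paper intends: the paper dismisses this claim with ``follows from an easy inspection of the main loop,'' and your induction on $i$ with the route analysis (entries created by $\Update$ have $m=0$, entries returned by $\handleMatures$ have $m>4k$, so the only remaining source is the Insert branch, which contributes exactly one matched character $x_i=y_{i+d}$ per row) is that inspection made explicit. Your closing observation that the guard $C(d,h)=c_{d,h}$ propagates backwards because $C$ is non-decreasing and capped at $c_{d,h}$ correctly patches the only subtle point in applying the induction hypothesis, so no gap remains.
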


The claim follows from an easy inspection of the main loop of the algorithm. The next combinatorial property justifies our handling 
of mature diagonals.
 
\begin{claim}\label{claim:algFlow}
	Let $i \in [n]$, during the $i$-th iteration, after invoking the procedure $\handleMatures(i)$,  
	the following holds: 
	\begin{enumerate}
		\item The list $\matureList$ consists of diagonals $d$ that were on this list at the end of iteration $i-1$ and for 
		which $x_{i+1}=y_{i+1+d}$. For such diagonals it holds that $x_{i-4k+1,\dots, i+1}= y_{i-4k+1+d,\dots, i+1+d}$. 
		Diagonals $d$ that were stored in $\matureList$ on previous iteration for which
		$x_{i+1}\neq y_{i+d+1}$ were migrated to $L_i$.
		\item $\rightMature$ stores the value of the largest diagonal stored in $\matureList$.
		\item If $\matureList$ contains at least two entries then $\maturePeriod = gcd\{\rightMature - d:\, d\in \matureList\}$.
	\end{enumerate} 
	Moreover, after invoking the procedure $\moveToMatures(d,i,h)$ items 2-3 still hold, and
	the list $\matureList$ contains only mature diagonals.
\end{claim}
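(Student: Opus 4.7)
The plan is to proceed by induction on $i$. In the base case $i=0$ the list $\matureList$ is empty and all invariants hold vacuously by the initialization. For the inductive step, assume the invariants hold at the end of iteration $i-1$, so in particular every diagonal $d$ currently in $\matureList$ satisfies $x_{i-4k+1,\dots,i}=y_{i-4k+1+d,\dots,i+d}$, that $\rightMature$ is the largest such $d$, and that $\maturePeriod=p:=\gcd\{\rightMature-d:\, d\in\matureList\}$ whenever $|\matureList|\ge 2$.

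I would then do a case analysis on $|\matureList|$ inside $\handleMatures(i)$. The trivial cases are $\matureList$ empty and $|\matureList|=1$: in the second case, the procedure simply checks whether $x_{i+1}\neq y_{i+\rightMature+1}$ and, if so, migrates the single diagonal to $L_i$, which clearly preserves invariants~1--3. The interesting case is $|\matureList|\ge 2$. Here I would apply Lemma~\ref{cor:matures} to the set $M=\matureList$ with $m=4k$: it gives periodicity of $x_{i-4k+1,\dots,i}$ with period $p$, and crucially parts~2 and~3 imply that the condition $x_{i+1}=x_{i+1-p}$ determines simultaneously, for every non-rightmost $d\in M$, whether $x_{i+1}=y_{i+1+d}$. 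Thus the code's single test $x_{i+1}=x_{i+1-p}$ correctly classifies all non-rightmost diagonals, and the separate test on $\rightMature$ handles the rightmost one. The migration to $L_i$ in each sub-branch therefore matches invariant~1; invariant~2 is re-established either automatically (when the rightmost stays) or by the explicit call to $\newRightMature$; and invariant~3 is re-established by $\newRightMature$ (which recomputes the gcd over the surviving mature diagonals) or trivially by setting $\maturePeriod=0$ in sub-branches where at most one mature diagonal survives.

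For the second part of the claim concerning $\moveToMatures(d,i,h)$, I would argue directly from the procedure. Since $\moveToMatures$ is called from the main loop only when $i-\startSlide>4k$ and $C(d,h)=c_{d,h}$, Claim~\ref{claim:mainLoop} certifies that $d$ is mature at row $i$, so the new $\matureList$ still contains only mature diagonals. For invariants~2--3 there are three sub-cases: when $\matureList$ becomes a singleton, the procedure sets $\rightMature=d$ and $\maturePeriod=0$ directly; when $d>\rightMature$, it calls $\newRightMature(d)$ which re-scans $\matureList$ and rebuilds the gcd from scratch; when $d\le\rightMature$, the rightmost diagonal is unchanged and the new gcd is $\gcd(p,\rightMature-d)$, exactly what the code computes. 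The main obstacle, and the real content of the argument, is verifying the two-test characterization in the multi-diagonal branch of $\handleMatures$, which is precisely what Lemma~\ref{cor:matures} was set up to provide; the remaining bookkeeping for $\rightMature$ and $\maturePeriod$ is then a mechanical check of the gcd updates.
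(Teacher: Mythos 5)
Your plan is correct and follows essentially the same route as the paper's proof: induction over iterations, Claim~\ref{claim:mainLoop} to certify maturity upon entry into $\matureList$, Lemma~\ref{cor:matures} (parts 2--3) to justify that the single test $x_{i+1}=x_{i+1-\maturePeriod}$ classifies all non-rightmost mature diagonals, and a mechanical check of the $\rightMature$/$\maturePeriod$ updates in $\newRightMature$ and $\moveToMatures$. The paper's own argument is just a terser version of the same case analysis.
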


\begin{proof}
For the second and third property. $\rightMature$ is updated whenever the rightmost diagonal leaves $\matureList$ or a new rightmost diagonal enters the list. Similarly, $\maturePeriod$ is updated to the claimed value when either the rightmost diagonal changes, a new diagonal enters $\matureList$, 
or all the other diagonals leave the list because of a mismatch.

For the first part. By Claim~\ref{claim:mainLoop} diagonal $d$ satisfies $x_{i-4k+1,\dots, i+1}= y_{i-4k+1+d,\dots, i+1+d}$ when it is moved
to the $\matureList$ in the main loop. Then it maintains this property inductively: If $d$ is the rightmost diagonal then it remains
on the $\matureList$ at iteration $i$ if $x_{i+1} = y_{i+1+\rightMature}$, and it is removed from the list otherwise. If $d$ is not the rightmost diagonal
then at iteration $i$ either $x_{i+1}= x_{i+1-\maturePeriod}$ or not. If $x_{i+1}= x_{i+1-\maturePeriod}$ then  the diagonal stays on $\matureList$ for the next iteration and by Lemma~\ref{cor:matures}, $x_{i+1} = y_{i+1+d}$. On the other hand, if $x_{i+1} \neq x_{i+1-\maturePeriod}$ then
the diagonal is moved from $\matureList$ to $L_{i}$ at iteration $i$, and by Lemma~\ref{cor:matures} it is the case that $x_{i+1} \neq y_{i+1+d}$.
\end{proof}

From the above we can conclude that diagonals are on $\matureList$ only when they are sliding. Once they stop sliding they are moved back
to list $L_i$. List $L_i$ maintains sliding diagonals for up-to $4k$ steps of each slide and then it moves them to $\matureList$ where they
continue sliding, or they end their slide. Algorithm~\ref{alg:noSuffixTree} mimics in this way the behavior of Algorithm~\ref{alg:topDownLMS}.

\subsection{Complexity Analysis}
\begin{lemma}
	Let $x,y\in \zo^n$, be such that $\Delta_e(x,y)\le k$, then Algorithm~\ref{alg:noSuffixTree} 
	computes $\Delta_e(x,y)$  in time $O(n + k^3)$ using $O(k)$ space, and can compute the optimal alignment using $O(k^2)$ extra space.
	The algorithm can be modified  to never run in time more than $O(kn)$.
\end{lemma}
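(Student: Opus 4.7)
The plan is to split the argument into three parts: correctness, running time, and space, and finally address the $O(kn)$ modification.

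For correctness, I will lean on Claim~\ref{claim:mainLoop} and Claim~\ref{claim:algFlow}, which together say that (a) whenever a triple $(d,h,\startSlide)$ sits on some $L_i$ with $C(d,h)=c_{d,h}$, diagonal $d$ has indeed been slid successfully from row $\startSlide$ to row $i$; and (b) $\matureList$ always contains exactly the diagonals that are currently sliding past the $4k$-row threshold, with $\rightMature$ and $\maturePeriod$ maintained consistently. Combining these with Lemma~\ref{cor:matures}, the per-row handling of mature diagonals in $\handleMatures$ correctly distinguishes the three cases (no mismatch, mismatch only on the rightmost, mismatch on all but possibly the rightmost). This shows that the final $L^h(d)$ values equal the $F^h(d)$ of Algorithm~\ref{alg:topDownLMS}, so the answer is the same.

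For the running time, I will account separately for three kinds of work. First, the row loop performs one character comparison per row through the rightmost-mature check in $\handleMatures$; summed over $i=0,\dots,n-1$ this contributes $O(n)$. Second, each triple $(d,h,\startSlide)$ lives on some list $L_i,\dots,L_{i+4k}$ for at most $4k$ steps before it either ends its slide (via a mismatch) or is moved to $\matureList$; by Lemma~\ref{lemma:globlaLMSCorrectness} there are $O(k^2)$ such triples, so the total work here is $O(k^3)$. Third, and this is the delicate part, I need to amortize the $O(|\matureList|)=O(k)$ cost of the gcd recomputations in $\newRightMature$ and of the bulk moves inside $\handleMatures$. I will argue that each expensive gcd recomputation can be charged either to the insertion of a diagonal into $\matureList$ (at most $O(k^2)$ insertions in total) or to a bulk removal of $\Theta(|\matureList|)$ entries from $\matureList$, each of which can itself be charged to the corresponding $\Theta(|\matureList|)$ earlier insertions. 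This gives $O(k^3)$ total maintenance cost, for a grand total of $O(n+k^3)$.

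For space, I will observe that only two of the lists $L_i,L_{i+1}$ are ever non-empty at once; together with $\matureList$ they hold at most $O(k)$ triples since there are only $2k+1$ active diagonals. The arrays $C$ and $D$ have at most $O(1)$ relevant entries per diagonal, so a sparse data structure with $O(1)$-time lookup suffices, taking $O(k)$ space. If only the edit distance is needed we track $L^h(0)$ alone; for the alignment, storing all $L^h(d)$ costs $O(k^2)$ extra. Finally, the streaming nature of the access is maintained by buffering the last $O(k)$ characters of $x$ and $y$, which is all we ever need for the mature-diagonal and next-character comparisons. For the $O(kn)$ safety guarantee, I will interleave Algorithm~\ref{alg:noSuffixTree} step-by-step with a plain implementation of Ukkonen's $O(kn)$ algorithm (which needs only $O(k)$ space and is itself streaming), and output whichever terminates first; this gives a worst-case time of $O(\min(n+k^3,kn))=O(kn)$ while retaining $O(n+k^3)$ in the typical regime $k\le\sqrt n$.

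The main obstacle I anticipate is the third time bucket: the $\maturePeriod$ recomputation in $\newRightMature$ is a genuine $\Omega(|\matureList|)$ operation, and it is triggered not only when a new rightmost diagonal enters but also when the current rightmost leaves. I will need to be careful to show that these ``rightmost changes'' cannot cascade, and that when they do happen, the $O(k)$ cost can be paid by credits deposited at prior insertions into $\matureList$. Once this amortization is pinned down, the rest of the argument is essentially a bookkeeping exercise against Lemma~\ref{lemma:globlaLMSCorrectness}.
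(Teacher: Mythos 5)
Your decomposition (correctness via Claims~\ref{claim:mainLoop} and~\ref{claim:algFlow} plus Lemma~\ref{cor:matures}; time split into an $O(n)$ per-row scan, $O(k^3)$ for the at-most-$4k$ step-wise slides of the $O(k^2)$ list entries, and $O(k^3)$ for maintenance of $\matureList$; $O(k)$ space from the sparse $C,D$ and the two live lists) is essentially the paper's proof. Your amortization of the gcd recomputations is the same count in different clothing: the paper simply observes that $\newRightMature()$ and $\moveToMatures()$ are each invoked $O(k^2)$ times (each of the $2k+1$ diagonals matures at most $k$ times) and costs $O(k)$ per invocation, which is exactly your ``charge $O(k)$ to the corresponding insertion'' scheme; the cascading you worry about cannot occur because $\handleMatures(i)$ calls $\newRightMature$ at most once per row. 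One small point you gloss over: the paper justifies the $O(k)$ cost of a full gcd recomputation by a potential argument on Euclid's algorithm (each pairwise gcd step either terminates in $O(1)$ or strictly decreases the running gcd, which starts at $\le 2k$), whereas you treat each pairwise gcd as unit cost; without that observation the naive bound is $O(k\log k)$ per invocation. The only genuinely different route is the $O(kn)$ safety guarantee: the paper gets it by direct per-row accounting inside the same algorithm (at most $O(k)$ diagonals and one $O(k)$-time gcd recomputation per row), while you dovetail with a separate streaming run of Ukkonen's $O(kn)$ algorithm and take the first to finish. Both are valid; the paper's version keeps a single algorithm and avoids doubling the work per input symbol, while yours is more modular and needs no further inspection of the mature-diagonal machinery.
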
	

\begin{proof}
	First, consider the procedure $\newRightMature()$. In total this procedure can be invoked at most $O(k^2)$ times as
	every diagonal can become a mature diagonal at most $k$ times and we have $2k+1$ different diagonals. The procedure
	has to compute the greatest common divisor of up-to $2k+1$ numbers from the range $\{0,\dots,2k\}$. If we use Euclid's algorithm
	then for each diagonal the algorithm either finishes in constant time or decreases the greatest common divisor by at least one. So the
	the total number of steps spent in the $gcd$ computation is $O(k)$, and hence the procedure always finishes its computation with $O(k)$ steps. 
	Thus the total time spent in procedure $\newRightMature()$ is bounded by $O(k^3)$.

	Similar argument gives that the algorithm spends in total at most $O(k^3)$ steps in procedure $\moveToMatures()$.

	Consider the procedure $\handleMatures()$. The procedure runs in constant time whenever there is no need 
	to remove any diagonal from $\matureList$. Otherwise, it runs in time $O(\matureList) \le O(k)$ (not counting time spent in $\newRightMature()$).
	In this latter case we remove at least one diagonal from  $\matureList$ which might happen at most $O(k^2)$ times. So the total
	time spent in $\handleMatures()$ when removing some diagonal is bounded by $O(k^3)$. When not removing any diagonals we spend there $O(n)$ steps.
	
	As for the main loop. The main loop will perform $n$ iterations. Each iteration may involve slides of several diagonals by one step.
	However, each diagonal in its given slide can slide step-wise in the main loop for at most $4k$ iterations, and then it is moved to $\matureList$.
	As each of the $2k+1$ diagonals undergoes at most $k$ slides, the total number of slide steps performed in the main loop is bounded by $O(k^3)$.
	
	Thus the algorithm runs in time $O(n+k^3)$, and one can verify that the amount of work needed per one of the $n$ symbols is tiny when not
	contributing to the $O(k^3)$ time bound.

	The algorithm can be modified so that it never runs in time more than $O(kn)$. Indeed, for each $n$ we have to process at most $O(k)$ diagonals, and except 
	for computing $gcd$ each of them costs $O(1)$ operations.
	Asymptotically the most time consuming part appears to be recomputing $gcd$ after every change in the rightmost mature diagonal. But this needs to be done
	at most once for each of the $n$ rows. As computing the $gcd$ for a given row will take at most $O(k)$ time the total running time is $O(nk)$.

	The space requirements are similar to Algorithm~\ref{alg:streamingLMS}, except that we do not need to build the suffix tree data structure.
	At each iteration $i\in [n]$ the algorithm maintains only non-empty lists $\matureList, L_i$ and $L_{i+1}$, maintains arrays $C$ and $D$,
	and accesses symbols $x_{i-2k,\dots,i+1}$ and $y_{i-k,\dots, i+k+1}$ from the input strings. The arrays $C$ and $D$ can be stored in $O(k)$ space 
	as only $O(k)$ entries need to be preserved at any given time (similarly to Algorithm~\ref{alg:streamingLMS}). If we are interested only
	in computing the edit distance but not an optimal alignment of $x$ and $y$ we do not have to store all the computed values of $L^h(D)$ so we need only $O(k)$ space in total.
	Otherwise we need $O(k^2)$ space to store all the values of $L^h(d)$.
\end{proof}

\section{Disussion and further improvements}

An optimal implementation of our algorithm can represent linked lists using fixed arrays of size $O(k)$ so that there is no need to
allocate and deallocate memory for each individual item. Most of the time, more than $n - O(k^3)$ steps, the algorithm spends
sliding along the rightmost mature diagonal as there are only mature diagonals sliding at those steps. An optimal implementation
should take this into account, and it should be centered around sliding the rightmost mature diagonal. Concieveably, the sliding could
be sped up by precomputed hashes of various substrings.

Algorithms~\ref{alg:streamingLMS} and~\ref{alg:noSuffixTree} can be combined to get an algorithm running in time $\min(O(n+k^2),n+O(k^3))$:
if $O(n+k^2) \le n+O(k^3)$ run the former algorithm otherwise run the latter one. Algorithm~\ref{alg:noSuffixTree} can also be modified
to build a suffix tree data structure for slides for a block of the next $4k$ rows like Algorithm~\ref{alg:streamingLMS} whenever
a new diagonal starts sliding. When there were only mature diagonals we would run as Algorithm~\ref{alg:noSuffixTree}. This would again achieve
time complexity $\min(O(n+k^2),n+O(k^3))$, and perhaps even $n+O(k^2)$ assuming a certain combinatorial properties of edit distance matrices were true.

% Another avenue to design an algorithm with time complexity $n+\tilde O(k^2)$ is by generalizing the idea of mature diagonals to
% diagonals of various age groups sliding for between $2^m$ and $2^{m+1}$ steps. Diagonals within a given age group would be treated
% similarly to mature diagonals in Algorithm~\ref{alg:noSuffixTree}. We believe that this achieves running time $n+\tilde O(k^2)$
% albeit for the cost of more complex code. We plan to include this modification in the full version of this paper.

Another avenue to design an algorithm with time complexity $n+\tilde O(k^2)$ is by generalizing the idea of mature diagonals to
diagonals of various age groups sliding for between $2^m$ and $2^{m+1}$ steps. Diagonals within a given age group would be treated
similarly to mature diagonals in Algorithm~\ref{alg:noSuffixTree}. For each age group $2^m\le \ell\le 2^{m+1}$, we divide the diagonals 
into equal segments of size $O(\ell)$. Diagonals within a given age group belonging to the same segment would be treated
as mature diagonals in Algorithm~\ref{alg:noSuffixTree}. In such a way we refrain from parallel sliding not only for slides of 
length $\Omega(k)$, but rather for any slide length, provided that the distance between the diagonals is small. 
We believe that this achieves running time $n+\tilde O(k^2)$ albeit for the cost of more complex code. 
We plan to include this modification in the full version of this paper.

	\bibliographystyle{amsalpha}
	\bibliography{editDistance}

\newcommand{\etalchar}[1]{$^{#1}$}
\providecommand{\bysame}{\leavevmode\hbox to3em{\hrulefill}\thinspace}
\providecommand{\MR}{\relax\ifhmode\unskip\space\fi MR }
% \MRhref is called by the amsart/book/proc definition of \MR.
\providecommand{\MRhref}[2]{%
  \href{http://www.ams.org/mathscinet-getitem?mr=#1}{#2}
}
\providecommand{\href}[2]{#2}
\begin{thebibliography}{BYJKK04}

\bibitem[AKO10]{AKO10}
Alexandr Andoni, Robert Krauthgamer, and Krzysztof Onak, \emph{Polylogarithmic
  approximation for edit distance and the asymmetric query complexity}, 51th
  Annual {IEEE} Symposium on Foundations of Computer Science, {FOCS} 2010,
  October 23-26, 2010, Las Vegas, Nevada, {USA}, 2010, pp.~377--386.

\bibitem[AO09]{AO09}
Alexandr Andoni and Krzysztof Onak, \emph{Approximating edit distance in
  near-linear time}, Proceedings of the Forty-first Annual ACM Symposium on
  Theory of Computing (New York, NY, USA), STOC '09, ACM, 2009, pp.~199--204.

\bibitem[BEK{\etalchar{+}}03]{BEKMRRS03}
Tugkan Batu, Funda Erg\"{u}n, Joe Kilian, Avner Magen, Sofya Raskhodnikova,
  Ronitt Rubinfeld, and Rahul Sami, \emph{A sublinear algorithm for weakly
  approximating edit distance}, Proceedings of the Thirty-fifth Annual ACM
  Symposium on Theory of Computing (New York, NY, USA), STOC '03, ACM, 2003,
  pp.~316--324.

\bibitem[BES06]{BES06}
Tu\u{g}kan Batu, Funda Ergun, and Cenk Sahinalp, \emph{Oblivious string
  embeddings and edit distance approximations}, Proceedings of the Seventeenth
  Annual ACM-SIAM Symposium on Discrete Algorithm (Philadelphia, PA, USA), SODA
  '06, Society for Industrial and Applied Mathematics, 2006, pp.~792--801.

\bibitem[BI15]{BI15}
Arturs Backurs and Piotr Indyk, \emph{Edit distance cannot be computed in
  strongly subquadratic time (unless {SETH} is false)}, Proceedings of the
  Forty-Seventh Annual ACM on Symposium on Theory of Computing (New York, NY,
  USA), STOC '15, ACM, 2015, pp.~51--58.

\bibitem[BK15]{BK15}
Karl Bringmann and Marvin K{\"{u}}nnemann, \emph{Quadratic conditional lower
  bounds for string problems and dynamic time warping}, {IEEE} 56th Annual
  Symposium on Foundations of Computer Science, {FOCS} 2015, Berkeley, CA, USA,
  17-20 October, 2015, 2015, pp.~79--97.

\bibitem[BYJKK04]{BJKK04}
Z.~Bar-Yossef, T.S. Jayram, R.~Krauthgamer, and R.~Kumar, \emph{Approximating
  edit distance efficiently}, Foundations of Computer Science, 2004.
  Proceedings. 45th Annual IEEE Symposium on, Oct 2004, pp.~550--559.

\bibitem[BZ16]{BZ16}
Djamal Belazzougui and Qin Zhang, \emph{Edit distance: Sketching, streaming and
  document exchange}, In Proc. of {IEEE} Symposium on Foundations of Computer
  Science {(FOCS} 2016), 2016, p.~to appear.

\bibitem[CGK16]{CGK16}
Diptarka Chakraborty, Elazar Goldenberg, and Michal Kouck{\'{y}},
  \emph{Streaming algorithms for embedding and computing edit distance in the
  low distance regime}, Proceedings of the 48th Annual {ACM} {SIGACT} Symposium
  on Theory of Computing, {STOC} 2016, Cambridge, MA, USA, June 18-21, 2016,
  2016, pp.~712--725.

\bibitem[CLL{\etalchar{+}}11]{CLLPTZ11}
Ho{-}Leung Chan, Tak~Wah Lam, Lap{-}Kei Lee, Jiangwei Pan, Hing{-}Fung Ting,
  and Qin Zhang, \emph{Edit distance to monotonicity in sliding windows},
  Algorithms and Computation - 22nd International Symposium, {ISAAC} 2011,
  Yokohama, Japan, December 5-8, 2011. Proceedings, 2011, pp.~564--573.

\bibitem[CR94]{CR94}
Maxime Crochemore and Wojciech Rytter, \emph{Text algorithms}, Oxford
  University Press, 1994.

\bibitem[EJ08]{EJ08}
Funda Erg{\"{u}}n and Hossein Jowhari, \emph{On distance to monotonicity and
  longest increasing subsequence of a data stream}, Proceedings of the
  Nineteenth Annual {ACM-SIAM} Symposium on Discrete Algorithms, {SODA} 2008,
  San Francisco, California, USA, January 20-22, 2008, 2008, pp.~730--736.

\bibitem[Fre75]{Fred75}
Michael~L. Fredman, \emph{On computing the length of longest increasing
  subsequences}, Discrete Mathematics \textbf{11} (1975), no.~1, 29 -- 35.

\bibitem[GG07]{GG07}
Anna G{\'{a}}l and Parikshit Gopalan, \emph{Lower bounds on streaming
  algorithms for approximating the length of the longest increasing
  subsequence}, 48th Annual {IEEE} Symposium on Foundations of Computer Science
  {(FOCS} 2007), October 20-23, 2007, Providence, RI, USA, Proceedings, 2007,
  pp.~294--304.

\bibitem[GJKK07]{GJKK07}
Parikshit Gopalan, T.~S. Jayram, Robert Krauthgamer, and Ravi Kumar,
  \emph{Estimating the sortedness of a data stream}, Proceedings of the
  Eighteenth Annual {ACM-SIAM} Symposium on Discrete Algorithms, {SODA} 2007,
  New Orleans, Louisiana, USA, January 7-9, 2007, 2007, pp.~318--327.

\bibitem[Gus97]{Gus97}
Dan Gusfield, \emph{Algorithms on strings, trees, and sequences - computer
  science and computational biology}, Cambridge University Press, 1997.

\bibitem[Lev66]{Lev65}
VI~Levenshtein, \emph{{Binary Codes Capable of Correcting Deletions, Insertions
  and Reversals}}, Soviet Physics Doklady \textbf{10} (1966), 707.

\bibitem[LMS98]{LMS98}
Gad~M. Landau, Eugene~W. Myers, and Jeanette~P. Schmidt, \emph{Incremental
  string comparison}, SIAM J. Comput. \textbf{27} (1998), no.~2, 557--582.

\bibitem[LVZ05]{LVZ05}
David Liben{-}Nowell, Erik Vee, and An~Zhu, \emph{Finding longest increasing
  and common subsequences in streaming data}, Computing and Combinatorics, 11th
  Annual International Conference, {COCOON} 2005, Kunming, China, August 16-29,
  2005, Proceedings, 2005, pp.~263--272.

\bibitem[McC76]{Mc76}
Edward~M. McCreight, \emph{A space-economical suffix tree construction
  algorithm}, J. ACM \textbf{23} (1976), no.~2, 262--272.

\bibitem[MP80]{MP80}
William~J. Masek and Michael~S. Paterson, \emph{A faster algorithm computing
  string edit distances}, Journal of Computer and System Sciences \textbf{20}
  (1980), no.~1, 18 -- 31.

\bibitem[Nav01]{Nav01}
Gonzalo Navarro, \emph{A guided tour to approximate string matching}, ACM
  Comput. Surv. \textbf{33} (2001), no.~1, 31--88.

\bibitem[PP08]{PP08}
Dimitrios~P. Papamichail and Georgios~P. Papamichail, \emph{Improved algorithms
  for approximate string matching (extended abstract)}, CoRR
  \textbf{abs/0807.4368} (2008).

\bibitem[Sch61]{Sch61}
C.~Schensted, \emph{Longest increasing and decreasing subsequences}, Canadian
  Journal of Mathematics \textbf{13} (1961), 179--191.

\bibitem[SS13]{SS13}
Michael~E. Saks and C.~Seshadhri, \emph{Space efficient streaming algorithms
  for the distance to monotonicity and asymmetric edit distance}, Proceedings
  of the Twenty-Fourth Annual {ACM-SIAM} Symposium on Discrete Algorithms,
  {SODA} 2013, New Orleans, Louisiana, USA, January 6-8, 2013, 2013,
  pp.~1698--1709.

\bibitem[SW07]{SW07}
Xiaoming Sun and David~P. Woodruff, \emph{The communication and streaming
  complexity of computing the longest common and increasing subsequences},
  Proceedings of the Eighteenth Annual {ACM-SIAM} Symposium on Discrete
  Algorithms, {SODA} 2007, New Orleans, Louisiana, USA, January 7-9, 2007,
  2007, pp.~336--345.

\bibitem[Ukk85]{UKK85}
Esko Ukkonen, \emph{Algorithms for approximate string matching}, Inf. Control
  \textbf{64} (1985), no.~1-3, 100--118.

\bibitem[Ukk95]{Ukkonen95}
Esko Ukkonen, \emph{On-line construction of suffix trees}, Algorithmica
  \textbf{14} (1995), no.~3, 249--260.

\bibitem[Wei73]{Weiner73}
Peter Weiner, \emph{Linear pattern matching algorithms}, Proceedings of the
  14th Annual Symposium on Switching and Automata Theory (Swat 1973)
  (Washington, DC, USA), SWAT '73, IEEE Computer Society, 1973, pp.~1--11.

\bibitem[WF74]{WF74}
Robert~A. Wagner and Michael~J. Fischer, \emph{The string-to-string correction
  problem}, J. ACM \textbf{21} (1974), no.~1, 168--173.

\end{thebibliography}
	%%%%%%%%%%%%%%%%%%%%%%%%%%%%%%%%%%%%%%%%%%%%%%%%%%%%%%%%%%%
	
\end{document}